\newtheorem{lemma}{Lemma}
\newtheorem{proposition}{Proposition}
\begin{document}

\title{Beyond Diagonal Reconfigurable Intelligent Surfaces Utilizing Graph Theory: Modeling, Architecture Design, and Optimization}

\author{Matteo~Nerini,~\IEEEmembership{Graduate Student Member,~IEEE},
        Shanpu~Shen,~\IEEEmembership{Senior Member,~IEEE},\\
        Hongyu~Li,~\IEEEmembership{Graduate Student Member,~IEEE},
        Bruno~Clerckx,~\IEEEmembership{Fellow,~IEEE}
        
\thanks{Corresponding author: Shanpu Shen.}
\thanks{M. Nerini, H. Li, and B. Clerckx are with the Department of Electrical and Electronic Engineering, Imperial College London, London SW7 2AZ, U.K. (e-mail: \{m.nerini20, c.li21, b.clerckx\}@imperial.ac.uk).}
\thanks{S. Shen is with the Department of Electrical Engineering and Electronics, University of Liverpool, Liverpool L69 3GJ, U.K. (e-mail: Shanpu.Shen@liverpool.ac.uk).}}

\maketitle

\begin{abstract}
Recently, beyond diagonal reconfigurable intelligent surface (BD-RIS) has been proposed to generalize conventional RIS.
BD-RIS has a scattering matrix that is not restricted to being diagonal and thus brings a performance improvement over conventional RIS.
While different BD-RIS architectures have been proposed, it still remains an open problem to develop a systematic approach to design BD-RIS architectures achieving the optimal trade-off between performance and circuit complexity.
In this work, we propose novel modeling, architecture design, and optimization for BD-RIS based on graph theory.
This graph theoretical modeling allows us to develop two new efficient BD-RIS architectures, denoted as tree-connected and forest-connected RIS.
Tree-connected RIS, whose corresponding graph is a tree, is proven to be the least complex BD-RIS architecture able to achieve the performance upper bound in \gls{miso} systems.
Besides, forest-connected RIS allows us to strike a balance between performance and complexity, further decreasing the complexity over tree-connected RIS.
To optimize tree-connected RIS, we derive a closed-form global optimal solution, while forest-connected RIS is optimized through a low-complexity iterative algorithm.
Numerical results confirm that tree-connected (resp. forest-connected) RIS achieves the same performance as fully-connected (resp. group-connected) RIS, while reducing the complexity by up to 16.4 times.
\end{abstract}

\glsresetall

\begin{IEEEkeywords}
Beyond diagonal reconfigurable intelligent surface (BD-RIS), forest-connected, graph theoretical modeling, tree-connected.
\end{IEEEkeywords}

\section{Introduction}
\label{sec:intro}

Reconfigurable intelligent surface (RIS) is expected to be a key
technology in 6G to enhance the performance  of wireless systems in
an efficient and cost-effective manner \cite{bas19,wu19,wu21}. An RIS
consists of a large number of reconfigurable scattering elements, each
with the capability to manipulate the  phase of an incident electromagnetic
wave, so that the phase shifts of these elements can be coordinated
to direct the scattered electromagnetic signal toward the intended
receiver. Owing to the benefits of low power consumption and cost-effective
architecture, RIS has gained widespread attention. 

In a conventional RIS architecture,  each element is independently
controlled by a tunable impedance connected to ground \cite{she20}.
Such architecture, referred to as the single-connected RIS, results
in a diagonal scattering matrix, which is also commonly known as the
phase shift matrix. Conventional RIS with diagonal scattering matrix
has been widely used to enhance different wireless systems, such
as single-cell \cite{wu19b,guo20},  \cite{liu21}  and multi-cell
communications  \cite{pan20}, \gls{ofdm}  \cite{li21}, \gls{noma}
\cite{xiu21},  \gls{rsma} \cite{ban21},  \gls{rf} sensing
systems \cite{hu20}, \gls{wpt} systems \cite{fen22}, and \gls{swipt}
systems \cite{zha22a}.  In addition, research has been conducted
to optimize RIS based on imperfect \gls{csi} \cite{pen22,che22},
in multi-RIS scenarios \cite{zhe21,mei22}, and with  discrete phase
shifts to account for practical hardware impairments \cite{wu19c,di20}.
Channel estimation protocols with reduced pilot overhead for RIS-aided systems have been proposed \cite{guo22}.
Active RIS have been proposed to overcome the multiplicative fading effect limiting the gains of passive RIS by amplifying the reflected signals via amplifiers integrated into their elements \cite{lon21,zha23}.
Furthermore, RIS prototypes have been realized in \cite{dai20,rao22}.

\begin{figure*}[t]
\centering{}\includegraphics[width=0.98\textwidth]{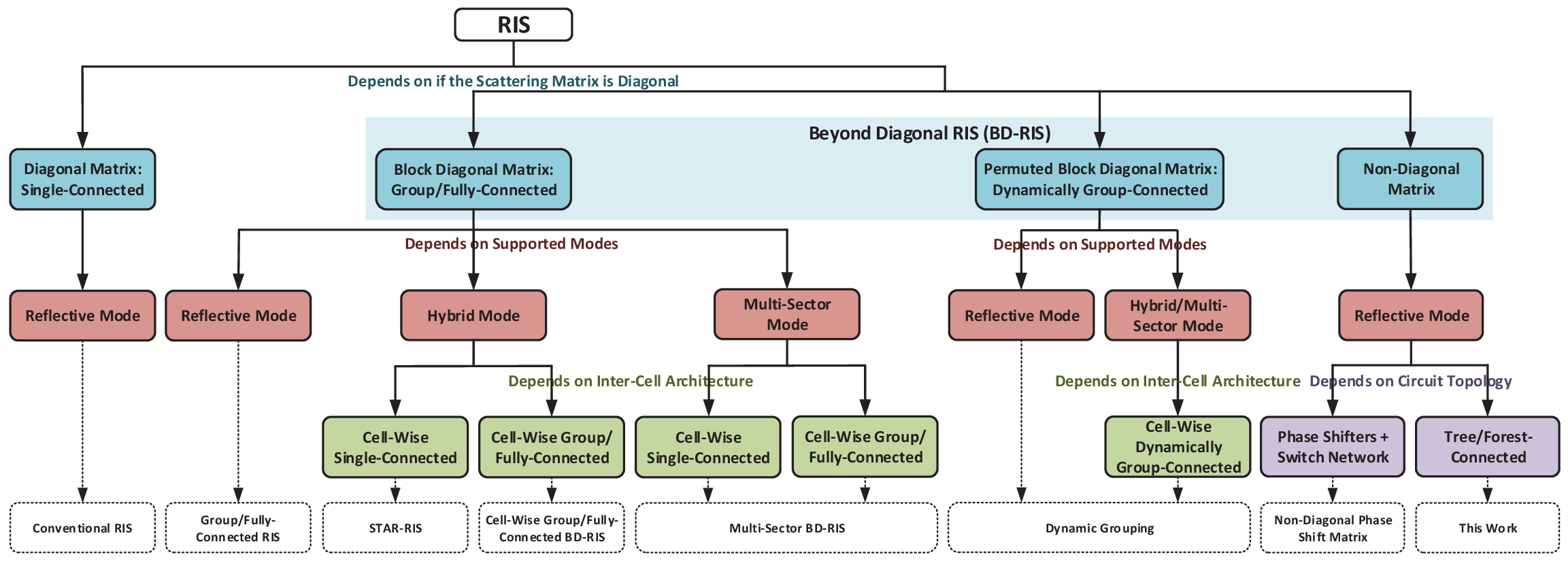}
\caption{RIS classification tree.}
\label{fig:ris-tree}
\end{figure*}

Recently, the conventional RIS architecture considered in \cite{wu19b}-\cite{rao22}
has been generalized with the introduction of beyond diagonal RIS
(BD-RIS). BD-RIS has a scattering matrix that is not restricted
to being diagonal \cite{li23-1}, with various architectures and modes
 available as depicted in the classification tree in Fig.~\ref{fig:ris-tree}.
Specifically, group-connected and fully-connected architectures, 
whereby some or all RIS elements are connected with each other through
 tunable impedance components,  are proposed in \cite{she20}, generalizing
the single-connected architecture to enhance the RIS performance.
Subsequently, group- and fully-connected RIS with discrete values
are efficiently optimized in \cite{ner21}, while the closed-form
global optimal solutions for group- and fully-connected RIS with
continuous values are derived  in \cite{ner22}. In addition,  the
\gls{star-ris} or \gls{ios} is presented in \cite{xu21,zha22b},
which differs from conventional RIS as signals impinging on this
RIS can be both reflected and transmitted through the RIS for a full
space coverage. \gls{star-ris} is generalized in \cite{li22-1}
which unifies different BD-RIS  modes (reflective/transmissive/hybrid)
and different BD-RIS architectures (single-/group-/fully-connected).
To further enhance RIS performance while maintaining a full-space
coverage, multi-sector BD-RIS is proposed in \cite{li22-2}, where
the elements are split into multiple sectors with each covering a
sector of space. Further, the synergy of multi-sector BD-RIS and
\gls{rsma} is proved to enlarge the coverage, improve the performance,
and save on antennas in multi-user systems \cite{li23-2}. Departing
from existing works on BD-RIS with fixed architectures \cite{she20,li23-1}-\cite{li23-2},
dynamically group-connected RIS has been proposed in \cite{li22-3},
 outperforming fixed group-connected RIS by dynamically dividing
the RIS elements into groups depending on the channel realization.
Besides, a BD-RIS with a non-diagonal phase shift matrix has been
proposed in \cite{li22}, which is able to achieve a higher rate than
conventional RIS.

Among all the BD-RISs \cite{she20,li23-1}-\cite{li22}, the fully-connected
architecture provides the highest flexibility and best performance,
which is however at the expense of high circuit complexity since it
has a large number of tunable impedance components. Although the group-connected
architecture can simplify the circuit complexity, it still remains
an open problem to develop a systematic approach to design BD-RIS
architecture achieving the optimal trade-off between performance and
circuit complexity, that is achieving the best performance with the
least complexity. To address this issue, in this work, we propose
novel modeling, architecture design, and optimization for BD-RIS by
utilizing graph theory. The contributions of this paper are summarized
as follows.

\textit{First}, we propose a novel modeling of BD-RIS architectures
 utilizing graph theory. To the authors\textquoteright{} best knowledge,
it is the first time that BD-RIS (or any other RIS) architectures are modeled through
graph theory. Specifically, we model each BD-RIS architecture by
a graph, whose vertices characterize the RIS ports and whose edges
characterize the tunable impedance components connecting the RIS
ports. This graph theoretical modeling lays down the foundations for
developing new efficient BD-RIS architectures.

\textit{Second}, we derive a necessary and sufficient condition for
a BD-RIS architecture to achieve the performance upper bound in \gls{miso}
systems. Consequently, we characterize the least complex BD-RISs achieving
such an upper bound. The resulting least-complexity BD-RIS architectures
are referred to as tree-connected RIS since their corresponding graph
is a tree. In addition, two specific examples of tree-connected RIS
are proposed, namely tridiagonal RIS and arrowhead RIS. Tree-connected
RIS achieves the same performance in MISO systems as fully-connected
RIS with a circuit complexity significantly reduced by $16.4$ times,
when $64$ RIS elements are considered.

\textit{Third}, we prove that there always exists one and only one global
optimal solution to optimize tree-connected RIS. To find this solution,
we derive a closed-form algorithm valid for any tree-connected architecture.
With this algorithm, tree-connected RIS achieves a performance improvement
of $51.7\%$ over single-connected RIS.
We show that the proposed algorithm is optimal in single-user as well as multi-user systems.

\textit{Fourth}, we propose forest-connected RIS as an additional BD-RIS
family to reach a trade-off between performance and circuit complexity.
Forest-connected RIS serves as a bridge between single- and tree-connected
RIS. We propose a low-complexity iterative algorithm to optimize
forest-connected RIS. Numerical results show that forest-connected
RIS achieves the same performance as group-connected RIS, while reducing
the circuit complexity by $2.4$ times.

\textit{Organization}: In Section~\ref{sec:system-model}, we introduce
the graph theoretical modeling of BD-RIS. In Section~\ref{sec:BD-RIS-Architecture-Design},
we derive tree- and forest-connected RIS architectures and provide
two examples of tree-connected RIS. In Section~\ref{sec:design-su-miso}
we provide a global optimal closed-form solution to optimize tree-connected
RIS and an iterative approach to optimize forest-connected RIS.
In Section~\ref{sec:results}, we evaluate the performance of the
proposed BD-RISs. Finally, Section~\ref{sec:conclusion}  concludes
this work.
For reproducible research, the simulation code is available at \url{https://github.com/matteonerini/bdris-utilizing-graph-theory}.

\textit{Notation}: Vectors and matrices are denoted with bold lower
and bold upper letters, respectively. Scalars are represented with
letters not in bold font. $\Re\left\{ a\right\} $, $\Im\left\{ a\right\} $,
$\left|a\right|$, $\arg\left(a\right)$, and $a^{*}$ refer to the
real part, imaginary part, modulus, phase, and the complex conjugate
of a complex scalar $a$, respectively. $\left[\mathrm{\mathbf{a}}\right]_{i}$
and $\left\Vert \mathbf{a}\right\Vert $ refer to the $i$th element
and $l_{2}$-norm of a vector $\mathrm{\mathbf{a}}$, respectively.
$\mathrm{\mathbf{A}}^{T}$, $\mathrm{\mathbf{A}}^{H}$, $\left[\mathrm{\mathbf{A}}\right]_{i,j}$,
and $\Vert\mathbf{A}\Vert$ refer to the transpose, conjugate transpose,
$\left(i,j\right)$th element, and $l_{2}$-norm of a matrix $\mathbf{A}$,
respectively. $\mathbf{A\sim B}$ means that the matrices $\mathbf{A}$
and $\mathbf{B}$ are equivalent. $\mathbb{R}$ and $\mathbb{C}$
denote real and complex number set, respectively. $j=\sqrt{-1}$ denotes
imaginary unit. $\mathbf{0}$ and $\mathbf{I}$ denote an all-zero
matrix and an identity matrix, respectively. $\mathcal{CN}\left(\mathbf{0},\mathbf{I}\right)$
denotes the distribution of a circularly symmetric complex Gaussian
random vector with mean vector $\mathbf{0}$ and covariance matrix
$\mathbf{I}$ and $\sim$ stands for ``distributed as''. diag$\left(\mathbf{a}\right)$
refers to a diagonal matrix with diagonal elements being the  vector
$\mathbf{a}$. diag$\left(\mathrm{\mathbf{A}}_{1},\ldots,\mathrm{\mathbf{A}}_{N}\right)$
refers to a block diagonal matrix with blocks being $\mathrm{\mathbf{A}}_{1},\ldots,\mathrm{\mathbf{A}}_{N}$.

\section{BD-RIS Modeling Utilizing Graph Theory}
\label{sec:system-model}

Consider a \gls{miso} system with an $M$-antenna transmitter and
a single-antenna receiver, which is aided by an $N$-element RIS.
The $N$-element RIS can be modeled as $N$ antennas  connected to
an $N$-port reconfigurable impedance network that is characterized
by the scattering matrix $\boldsymbol{\Theta}\in\mathbb{C}^{N\times N}$
\cite{she20}. We assume the direct channel between the transmitter
and receiver is blocked and thus is negligible compared to the indirect
channel provided by RIS. Therefore,  the overall channel $\mathbf{h}\in\mathbb{C}^{1\times M}$
between the transmitter and receiver can be written as 
\begin{equation}
\mathbf{h}=\mathbf{h}_{RI}\boldsymbol{\Theta}\mathbf{H}_{IT},\label{eq:H}
\end{equation}
where $\mathbf{h}_{RI}\in\mathbb{C}^{1\times N}$ and $\mathbf{H}_{IT}\in\mathbb{C}^{N\times M}$
are the channel matrices from the RIS to the receiver, and from the
transmitter to the RIS, respectively. We denote the transmit signal
as $\mathbf{x}=\mathbf{w}s$, where $\mathbf{w}\in\mathbb{C}^{M\times1}$
is the precoder satisfying $\left\Vert \mathbf{w}\right\Vert =1$
and $s\in\mathbb{C}$ is the transmit symbol with  power $P_{T}=\mathrm{E}\text{[\ensuremath{\left|s\right|^{2}}]}$.
Hence,  the received signal is given by $y=\mathbf{h}\mathbf{x}+n$,
where $n$ is the \gls{awgn}.

The $N$-port reconfigurable impedance network consists of tunable
passive impedance components. In addition to S-parameters (the scattering
matrix), the $N$-port reconfigurable impedance network can be also
characterized by Y-parameters \cite{poz11}, that is the admittance
matrix denoted as $\mathbf{Y}\in\mathbb{C}^{N\times N}$. According
to microwave network theory \cite{poz11}, the scattering matrix $\boldsymbol{\Theta}$
and admittance matrix $\mathbf{Y}$ are related by 
\begin{equation}
\boldsymbol{\Theta}=\left(\mathbf{I}+Z_{0}\mathbf{Y}\right)^{-1}\left(\mathbf{I}-Z_{0}\mathbf{Y}\right),\label{eq:T(Y)}
\end{equation}
where $Z_{0}$ denotes the reference impedance used for computing
the scattering parameter, and usually set as $Z_{0}=50\:\Omega$.
Furthermore, we have $\mathbf{Y}=\mathbf{Y}^{T}$ and $\boldsymbol{\Theta}=\boldsymbol{\Theta}^{T}$
because of the reciprocity of the reconfigurable impedance network.
To maximize the power scattered by the RIS, $\mathbf{Y}$ should be
purely susceptive and thus writes as $\mathbf{Y}=j\mathbf{B}$, where
$\mathbf{B}\in\mathbb{R}^{N\times N}$ denotes the susceptance matrix
of the $N$-port reconfigurable impedance network\footnote{In this study, we focus on passive RIS, assumed to be lossless to maximize the reflection gain.
However, active RIS with negative resistive components can be considered to further improve the gain \cite{lon21,zha23}.}.
Hence, $\boldsymbol{\Theta}$
is given by 
\begin{equation}
\boldsymbol{\Theta}=\left(\mathbf{I}+jZ_{0}\mathbf{B}\right)^{-1}\left(\mathbf{I}-jZ_{0}\mathbf{B}\right),\label{eq:T(B)}
\end{equation}
yielding that $\boldsymbol{\Theta}$ is symmetric unitary in the case of reciprocal and lossless RIS.
Depending on the circuit topology of the $N$-port reconfigurable
impedance network, the scattering matrix $\boldsymbol{\Theta}$ and
admittance matrix $\mathbf{Y}$ satisfy different constraints, which
results in diagonal RIS and beyond diagonal RIS as follows. 

\subsection{Diagonal RIS}

In diagonal RIS, each RIS port is not connected to the other ports
and is connected to ground through a tunable admittance \cite{she20}.
Accordingly, the admittance matrix $\mathbf{Y}$ and the scattering
matrix $\boldsymbol{\Theta}$ are diagonal, written as
\begin{equation}
\mathbf{Y}=\mathrm{diag}\left(Y_{1},Y_{2},\ldots,Y_{N}\right),
\end{equation}
\begin{equation}
\mathbf{\boldsymbol{\Theta}}=\mathrm{diag}\left(e^{j\theta_{1}},e^{j\theta_{2}},\ldots,e^{j\theta_{N}}\right),
\end{equation}
where $Y_{n}$ and $\theta_{n}$ are respectively the tunable admittance
and phase shift for the $n$th RIS port for $n=1,\ldots,N$. Such
circuit topology is also referred to as single-connected RIS \cite{she20}
and has been widely adopted \cite{wu19b}-\cite{rao22}.

\subsection{Beyond Diagonal RIS}

In BD-RIS, the RIS ports can also be connected to each other through
additional tunable admittance components. We denote the tunable admittance
connecting the $n$th port to the $m$th port as $Y_{n,m}$. According
to \cite{she16}, given the admittance components $Y_{n}$ and $Y_{n,m}$,
the $(n,m)$th entry of the admittance matrix $\mathbf{Y}$ is given
by 
\begin{equation}
\left[\mathbf{Y}\right]_{n,m}=\begin{cases}
-Y_{n,m} & n\neq m\\
Y_{n}+\sum_{k\neq n}Y_{n,k} & n=m
\end{cases}.\label{eq:Yij}
\end{equation}
Thus, by selecting $Y_{n,m}=-\left[\mathbf{Y}\right]_{n,m}$ and $Y_{n}=\sum_{k}\left[\mathbf{Y}\right]_{n,k}$,
we can implement an arbitrary symmetric $\mathbf{Y}$ so that the
admittance matrix $\mathbf{Y}$ and the scattering matrix $\boldsymbol{\Theta}$
are beyond diagonal. Particularly, there are two special categories
of BD-RIS, namely fully-connected and group-connected RIS \cite{she20}.
Due to the enhanced flexibility in $\mathbf{Y}$ and $\boldsymbol{\Theta}$,
BD-RIS is proven to outperform single-connected RIS \cite{she20,li23-1}-\cite{li22}.

\subsection{Graph Theoretical Modeling}

To model the general circuit topology of BD-RIS, we
resort for the first time to graph theoretical tools \cite{bon76}.
Specifically, we represent the circuit topology of BD-RIS through
a graph
\begin{equation}
\mathcal{G}=\left(\mathcal{V},\mathcal{E}\right),
\end{equation}
where $\mathcal{V}$ represents the \textit{vertex set} of $\mathcal{G}$
and is given by the set of indices of RIS ports, i.e. 
\begin{equation}
\mathcal{V}=\left\{ 1,2,\ldots,N\right\} ,
\end{equation}
and $\mathcal{E}$ represents the \textit{edge set} of $\mathcal{G}$
and is given by 
\begin{equation}
\mathcal{E}=\left\{ \left(n_{\ell},m_{\ell}\right)|\:n_{\ell},m_{\ell}\in\mathcal{V},\:Y_{n_{\ell},m_{\ell}}\neq0,\:n_{\ell}\neq m_{\ell}\right\} .
\end{equation}
Thus, there exists an edge between vertex $n_{\ell}$ and vertex $m_{\ell}$
if and only if there is a tunable admittance connecting port $n_{\ell}$
and port $m_{\ell}$, namely the $\ell$th admittance. Accordingly,
given a graph $\mathcal{G}$ with edge set $\mathcal{E}=\left\{ \left(n_{\ell},m_{\ell}\right)\right\} _{\ell=1}^{L}$,
the corresponding BD-RIS admittance matrix has $2L$ non-zero off-diagonal
entries, i.e.  $\left[\mathbf{Y}\right]{}_{n_{\ell},m_{\ell}}$ and
$\left[\mathbf{Y}\right]{}_{m_{\ell},n_{\ell}}$ for $\ell=1,\ldots,L$,
while the other off-diagonal entries are zero.
In the following, we define the circuit complexity of a BD-RIS architecture as the number of tunable admittance components in its circuit topology.
Thus, the circuit complexity of a BD-RIS represented by a graph with $L$ edges is given by $N+L$ since it includes $N$ tunable admittance components connecting each element to ground, and $L$ tunable admittance components interconnecting the elements to each other.

\begin{figure}[t]
\centering{}\includegraphics[width=0.46\textwidth]{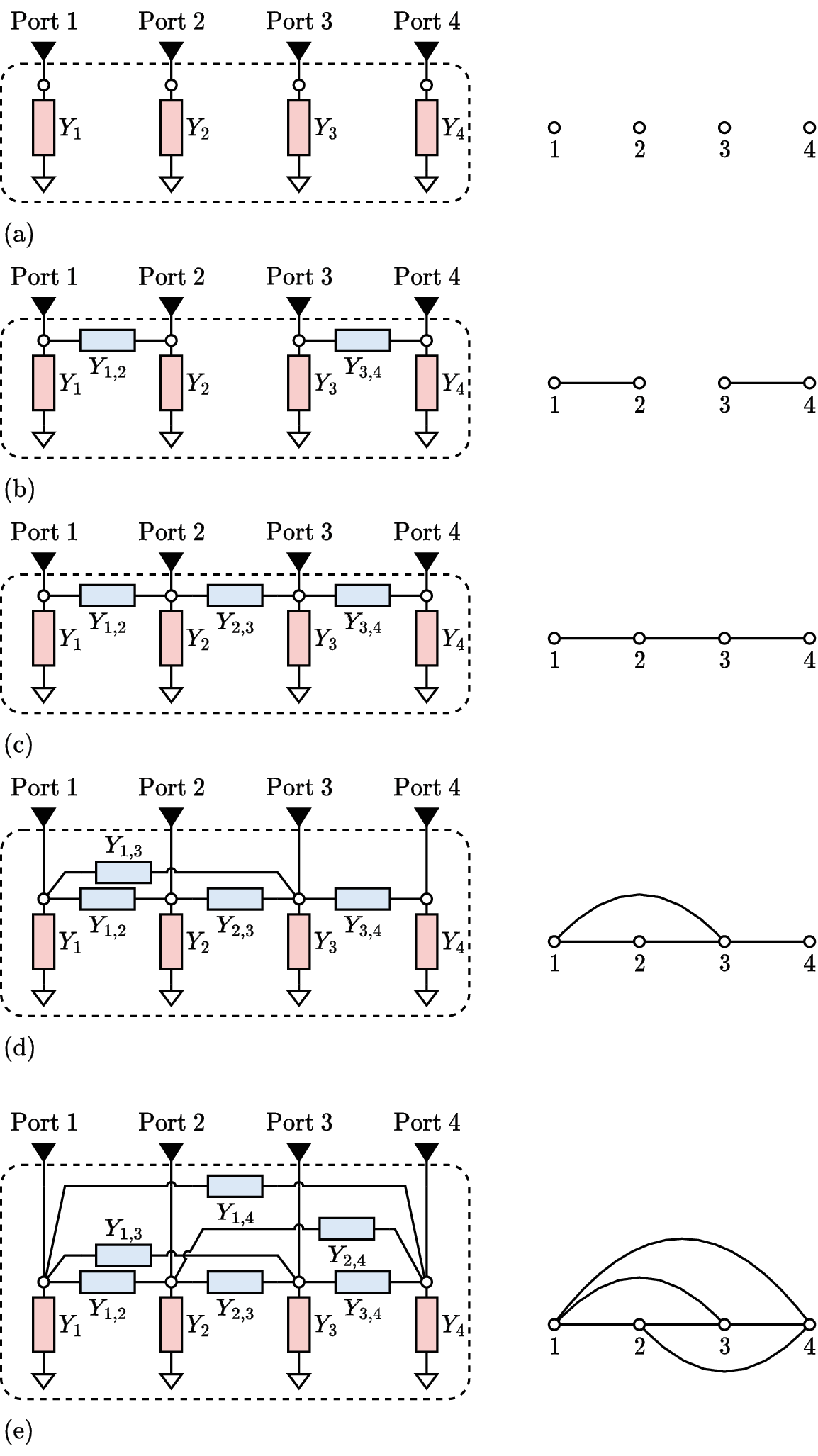}
\caption{Examples of $4$-port BD-RIS architectures (left) and their corresponding
graphs $\mathcal{G}$ (right), where $\mathcal{G}$ is (a) empty,
(b) disconnected and acyclic, (c) connected and acyclic, i.e., a tree,
 (d) connected and cyclic, (e) complete. }
\label{fig:example} 
\end{figure}

To better clarify our graph theoretical modeling of BD-RIS, we show
a $4$-element single-connected RIS with its associated graph in Fig.~\ref{fig:example}(a),
and four examples of $4$-element BD-RISs with their associated graphs
in Fig.~\ref{fig:example}(b)-(e). We also use Fig.~\ref{fig:example}
to illustrate the graph theoretical definitions \cite{bon76} that
are used in this work, as briefly summarized in the following. 
\begin{itemize}
\item A graph is\textit{ empty }when there is no edge.
\item A graph is  \textit{connected} when there is a \textit{path}, i.e.,
a finite sequence of distinct edges joining a sequence of distinct
vertices, from any vertex to any other vertex. 
\item A graph that is not connected is called \textit{disconnected}. 
\item A \textit{cyclic} graph is a graph containing at least one \textit{cycle},
i.e., a finite sequence of distinct edges joining a sequence of vertices,
where only the first and last vertices are equal. 
\item A graph that is not cyclic is  called \textit{acyclic}, or a \textit{forest}. 
\item A \textit{tree} is defined as a connected and acyclic graph. 
\item A graph is \textit{complete }when each pair of distinct vertices is
joined by an edge.
\item A graph is \textit{simple }when there is no edge with identical ends
(called \textit{loops}) and no multiple edges joining the same pair
of vertices.
\end{itemize}
Given these definitions, we recognize that the graph for the single-connected
RIS in Fig.~\ref{fig:example}(a) is empty, the graph in Fig.~\ref{fig:example}(b)
is disconnected and acyclic, the graph in Fig.~\ref{fig:example}(c)
is connected and acyclic, i.e., a tree, the graph in Fig.~\ref{fig:example}(d)
is connected and cyclic as it presents the cycle $1-2-3$, and the
graph for the fully-connected RIS in Fig.~\ref{fig:example}(e) is
complete. In addition, graphs associated with BD-RIS are always simple\footnote{In this work, we consider graphs where the edges are not oriented since we assume the RIS to have a reciprocal reconfigurable impedance network. BD-RIS architectures with non-reciprocal reconfigurable impedance networks could be modeled through \textit{directed} graphs, i.e., graphs with oriented edges, but are beyond the scope of this work.}.
Therefore, graph theory is effective to characterize and model the
circuit topology of BD-RIS and the corresponding admittance matrix
$\mathbf{Y}$.

Utilizing graph theory, we propose two novel  families of BD-RIS
with remarkable properties, which are referred to as tree-connected
RIS and forest-connected RIS, in the following section.

\section{BD-RIS Architecture Design Utilizing Graph Theory}
\label{sec:BD-RIS-Architecture-Design}

\subsection{Tree-Connected RIS}

As a starting point, we investigate how the graph of BD-RIS influences
the system performance. To that end, we formulate the BD-RIS optimization
problem for MISO systems utilizing graph theory. Specifically, we
assume the BD-RIS is lossless and purely susceptive and aim to maximize
the received signal power $P_{R}=P_{T}\left\vert \mathbf{h}_{RI}\boldsymbol{\Theta}\mathbf{H}_{IT}\mathbf{w}\right\vert ^{2}$
by jointly optimizing $\mathbf{w}$ and $\boldsymbol{\Theta}$\footnote{Maximizing the received signal power or the achievable rate are equivalent problems in single-user systems.
Thus, we consider the received signal power since it is independent of the noise power.}. Thus,
we can formulate the problem as
\begin{align}
\underset{\boldsymbol{\Theta},\mathbf{w}}{\mathsf{\mathrm{max}}}\;\; & P_{T}\left\vert \mathbf{h}_{RI}\boldsymbol{\Theta}\mathbf{H}_{IT}\mathbf{w}\right\vert ^{2}\label{eq:received power}\\
\mathsf{\mathrm{s.t.}}\;\;\; & \boldsymbol{\Theta}=\left(\mathbf{I}+jZ_{0}\mathbf{B}\right)^{-1}\left(\mathbf{I}-jZ_{0}\mathbf{B}\right),\label{eq:received power C1}\\
 & \mathbf{B}=\mathbf{B}^{T},\label{eq:received power C2}\\
 & \mathbf{B}\in\mathcal{B}_{\mathcal{G}},\label{eq:received power C3}\\
 & \Vert\mathbf{w}\Vert=1,\label{eq:received power C4}
\end{align}
where $\mathcal{B}_{\mathcal{G}}$ represents the set of possible
susceptance matrices of BD-RISs characterized by the graph $\mathcal{G}=\left(\mathcal{V},\mathcal{E}\right)$
and is written as
\begin{equation}
\mathcal{B}_{\mathcal{G}}=\left\{ \mathbf{B}|\left[\mathbf{B}\right]{}_{n,m}=0,n\neq m,n,m\in\mathcal{V},\left(n,m\right)\notin\mathcal{E}\right\} 
\end{equation}
which means that an off-diagonal entry is zero if there exists no
corresponding edge in the graph. 
The graph $\mathcal{G}$ is fixed in \eqref{eq:received power}-\eqref{eq:received power C4} since the interconnections between the RIS elements are fixed to avoid additional circuit complexity.
Note that problem \eqref{eq:received power}-\eqref{eq:received power C4} considers the optimization of the RIS based on the instantaneous channel, as typically performed in related literature \cite{bas19,wu19,wu21}.
Perfect \gls{csi} is assumed, which can be obtained with the semi-passive channel estimation protocol discussed in \cite{wu21}.

From \eqref{eq:received power}-\eqref{eq:received power C4}, we
can deduce that the maximum received signal power depends on the graph
of the BD-RIS. Specifically, when the graph $\mathcal{G}$ has more
edges, the susceptance matrix $\mathbf{B}$ is more flexible so that
the maximum received signal power is higher. However, the enhanced
performance is achieved at the expense of higher circuit complexity
(more connections from the increased number of edges). Thus, it is worth designing the graph of BD-RIS to achieve the best performance with the least
complexity. 

To that end, we first consider the case that the graph $\mathcal{G}$
is complete, that is the fully-connected RIS.
As shown in \cite{she20}
and \cite{ner22}, maximum ratio transmission (MRT) gives the optimal
$\mathbf{w}$ and subsequently the fully-connected RIS can achieve
the upper bound of the received signal power given by 
\begin{equation}
\bar{P}_{R}=P_{T}\left\Vert \mathbf{h}_{RI}\right\Vert ^{2}\left\Vert \mathbf{H}_{IT}\right\Vert ^{2},\label{eq:PR-UB}
\end{equation}
following the sub-multiplicativity of the $l_2$-norm and that $\boldsymbol{\Theta}^H\boldsymbol{\Theta}=\mathbf{I}$. Leveraging
this upper bound, herein we define a BD-RIS that  achieves the performance
upper bound $\bar{P}_{R}$ for any channel realization as MISO optimal.
Thus, the fully-connected RIS is MISO optimal. More generally, the
following lemma provides a sufficient and necessary  condition for
a BD-RIS to be MISO optimal.

\begin{lemma} A BD-RIS with associated graph $\mathcal{G}$ is MISO
optimal if and only if $\mathcal{G}$ is a connected graph. \label{lem:connected}
\end{lemma}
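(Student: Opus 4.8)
The plan is to reduce MISO optimality to a single alignment condition and then read that condition off the connectivity of $\mathcal{G}$. First I would fix the precoder to its optimum: since $\left|\mathbf{h}_{RI}\boldsymbol{\Theta}\mathbf{H}_{IT}\mathbf{w}\right|^{2}$ is maximized over unit-norm $\mathbf{w}$ by MRT, the problem collapses to maximizing $\left\Vert \mathbf{h}_{RI}\boldsymbol{\Theta}\mathbf{H}_{IT}\right\Vert ^{2}$ over the admissible $\boldsymbol{\Theta}$. Because $\mathbf{B}$ is real and symmetric, the Cayley transform in \eqref{eq:T(B)} makes $\boldsymbol{\Theta}$ symmetric and unitary, so $\left\Vert \mathbf{h}_{RI}\boldsymbol{\Theta}\right\Vert =\left\Vert \mathbf{h}_{RI}\right\Vert $ and the sub-multiplicativity used for \eqref{eq:PR-UB} holds with equality if and only if the row vector $\mathbf{h}_{RI}\boldsymbol{\Theta}$ is aligned with the leading left singular direction of $\mathbf{H}_{IT}$. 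Writing $\mathbf{p}=\mathbf{h}_{RI}^{T}/\left\Vert \mathbf{h}_{RI}\right\Vert $ and letting $\mathbf{q}$ be the conjugate of the leading left singular vector of $\mathbf{H}_{IT}$, this becomes the key reformulation $\boldsymbol{\Theta}\mathbf{p}=e^{j\psi}\mathbf{q}$ for some phase $\psi$, and MISO optimality means such a $\boldsymbol{\Theta}$ exists for every channel.

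For necessity I would argue the contrapositive. If $\mathcal{G}$ is disconnected its vertices partition into two nonempty sets $\mathcal{V}_{1},\mathcal{V}_{2}$ with no joining edge, so after relabeling every $\mathbf{B}\in\mathcal{B}_{\mathcal{G}}$ is block diagonal, and hence so is $\boldsymbol{\Theta}$, with each diagonal block unitary. A block-diagonal unitary preserves the Euclidean norm of the subvector supported on each block, so $\left\Vert (\boldsymbol{\Theta}\mathbf{p})_{\mathcal{V}_{1}}\right\Vert =\left\Vert \mathbf{p}_{\mathcal{V}_{1}}\right\Vert $ for any admissible $\boldsymbol{\Theta}$. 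Choosing a channel for which $\mathbf{p}$ is supported on $\mathcal{V}_{1}$ while $\mathbf{q}$ is supported on $\mathcal{V}_{2}$ gives $\left\Vert \mathbf{p}_{\mathcal{V}_{1}}\right\Vert \neq\left\Vert \mathbf{q}_{\mathcal{V}_{1}}\right\Vert $, so $\boldsymbol{\Theta}\mathbf{p}=e^{j\psi}\mathbf{q}$ is impossible and the bound is not attained; the architecture is therefore not MISO optimal. This direction is short and is not the main difficulty.

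The substance is sufficiency, which I would obtain constructively. Since a connected $\mathcal{G}$ contains a spanning tree $\mathcal{T}$ and $\mathcal{B}_{\mathcal{T}}\subseteq\mathcal{B}_{\mathcal{G}}$, it suffices to realize the alignment on a tree. I would rewrite $\boldsymbol{\Theta}\mathbf{p}=e^{j\psi}\mathbf{q}$ in the equivalent linear form $\mathbf{B}\mathbf{s}=\mathbf{d}$, with $\mathbf{s}=\mathbf{p}+e^{j\psi}\mathbf{q}$ and $\mathbf{d}=(jZ_{0})^{-1}(\mathbf{p}-e^{j\psi}\mathbf{q})$; this follows by clearing the inverse in \eqref{eq:T(B)} and is reversible because $\mathbf{I}+jZ_{0}\mathbf{B}$ is always invertible. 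The task is now to find a real symmetric $\mathbf{B}$ supported on $\mathcal{T}$ solving $\mathbf{B}\mathbf{s}=\mathbf{d}$, equivalently the two real systems $\mathbf{B}\mathbf{s}_{R}=\mathbf{d}_{R}$ and $\mathbf{B}\mathbf{s}_{I}=\mathbf{d}_{I}$. I would solve it by leaf-to-root elimination: rooting $\mathcal{T}$ and processing vertices from the leaves inward, the equation at a non-root vertex $v$ (with parent $\pi(v)$) involves as unknowns only its diagonal entry $[\mathbf{B}]_{v,v}$ and its parent-edge entry $[\mathbf{B}]_{v,\pi(v)}$, the child-edge entries being already determined, so it is a $2\times2$ real system with determinant proportional to $\Im\{\overline{s_{v}}\,s_{\pi(v)}\}$; at the root a single diagonal unknown must meet two scalar equations.

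The crux, and the step I expect to be the main obstacle, is showing this elimination terminates consistently. Two points must be checked: nonsingularity of the $2\times2$ systems, and automatic satisfaction of the leftover root equation. For the latter I would invoke the symmetry of $\mathbf{B}$: once $\mathbf{B}\mathbf{s}_{R}=\mathbf{d}_{R}$ holds everywhere, the identity $\mathbf{s}_{R}^{T}\mathbf{B}\mathbf{s}_{I}=\mathbf{s}_{I}^{T}\mathbf{B}\mathbf{s}_{R}$ forces the residual in the root's imaginary equation to equal $-\Im\{\mathbf{s}^{H}\mathbf{d}\}$ divided by the root entry of $\mathbf{s}_{R}$, and a direct computation shows $\mathbf{s}^{H}\mathbf{d}$ is real for every $\psi$ because $\left\Vert \mathbf{p}\right\Vert =\left\Vert \mathbf{q}\right\Vert =1$, so the residual vanishes. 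The nonsingularity is a genericity condition on the entries of $\mathbf{s}$, which I would secure using the free phase $\psi$: for channels with full-support $\mathbf{h}_{RI}$ and $\mathbf{q}$ only finitely many $\psi$ are excluded, so a valid choice exists. I would finally note that the measure-zero degenerate channels, where entries of $\mathbf{p}$ and $\mathbf{q}$ vanish simultaneously, are the delicate case and do not affect the performance claim, thereby closing the equivalence.
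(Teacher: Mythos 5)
Your proof is correct (up to the same genericity caveats the paper itself incurs) and takes a genuinely different route where it matters, namely sufficiency. The paper, like you, reduces a connected graph to a spanning tree and converts the alignment condition into a linear system in the susceptance entries ($\mathbf{B}\boldsymbol{\alpha}=\boldsymbol{\beta}$, your $\mathbf{B}\mathbf{s}=\mathbf{d}$ with $\psi=0$), but it then solves that system abstractly: it stacks real and imaginary parts into a $2N\times(2N-1)$ system $\mathbf{A}\mathbf{x}=\mathbf{b}$, proves full column rank of $\mathbf{A}$ by induction on the tree (Appendix~C), proves $r\left(\left[\mathbf{A}|\mathbf{b}\right]\right)=2N-1$ by explicit row reduction (Appendix~D), and invokes Rouch\'e--Capelli. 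Your leaf-to-root elimination is a constructive counterpart of exactly those two steps: your edge-wise determinant condition $\Im\{\overline{s_v}\,s_{\pi(v)}\}\neq 0$ is precisely the paper's ``linearly independent with probability 1'' condition on the pairs $\left[\Re\{[\boldsymbol{\alpha}]_n\},\Im\{[\boldsymbol{\alpha}]_n\}\right]$, and your root-consistency argument (residual $=-\Im\{\mathbf{s}^{H}\mathbf{d}\}/\Re\{s_r\}$, which vanishes since $\Vert\mathbf{p}\Vert=\Vert\mathbf{q}\Vert=1$ makes $\mathbf{s}^{H}\mathbf{d}$ real) is the same cancellation the paper exhibits when it shows $[\mathbf{b}'']_{N}=jZ_{0}\left(\Vert\hat{\mathbf{h}}_{RI}^{H}\Vert^{2}-\Vert\mathbf{u}_{IT}\Vert^{2}\right)=0$. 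What your version buys: an explicit $O(N)$ sweep instead of a pseudo-inverse, a transparent explanation of why trees are exactly the solvable architectures (each vertex contributes one diagonal and one parent-edge unknown against one complex equation), and the free phase $\psi$, which gives you a way to dodge some degenerate channels that the paper's fixed-phase system cannot. What the paper's version buys: the rank argument delivers uniqueness of the optimal $\mathbf{B}$ in the same stroke (needed for the ``one and only one'' claim of Section~IV.A); your elimination also yields this for fixed $\psi$, though you do not emphasize it. On necessity, your disjoint-support counterexample is cleaner and arguably tighter than the paper's argument that inequality $(a)$ is ``in general'' strict: since MISO optimality is defined as attaining $\bar{P}_{R}$ for any channel realization, one explicit failing channel suffices. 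Note finally that neither proof fully escapes genericity---your determinant and $\Re\{s_r\}\neq0$ conditions, and the paper's rank conditions, both fail on a measure-zero set of channels, a gap both you and the paper acknowledge rather than close.
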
 \begin{proof} Please refer to Appendix~A. \end{proof} 

Using Lemma~\ref{lem:connected}, we have  the following proposition
providing the least complexity for MISO optimal BD-RIS.

\begin{proposition} The minimum number of edges in the graph of a
MISO optimal BD-RIS is $N-1$, with $N$ denoting the number of RIS
elements. \label{pro:N-1} \end{proposition}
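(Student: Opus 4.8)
The plan is to prove Proposition~\ref{pro:N-1} by combining Lemma~\ref{lem:connected} with a classical result from graph theory: any connected graph on $N$ vertices must contain at least $N-1$ edges, and this bound is tight since a tree attains it. Since Lemma~\ref{lem:connected} establishes that a BD-RIS is MISO optimal if and only if its associated graph $\mathcal{G}$ is connected, minimizing the number of edges (equivalently, the circuit complexity) among MISO optimal architectures is exactly the problem of finding the minimum number of edges in a connected graph on $N=|\mathcal{V}|$ vertices.

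First I would invoke Lemma~\ref{lem:connected} to reduce the claim to a purely graph-theoretical statement: it suffices to show that every connected graph on $N$ vertices has at least $N-1$ edges, and that this lower bound is achievable. For the lower bound, I would argue that a connected graph must contain a spanning tree, and a tree on $N$ vertices has exactly $N-1$ edges; hence the original connected graph has at least $N-1$ edges. Alternatively, a clean self-contained argument proceeds by induction on $N$ (or by the incremental edge-addition argument): starting from $N$ isolated vertices, which form $N$ connected components, each added edge can reduce the number of connected components by at most one, so reaching a single connected component requires at least $N-1$ edges. Either route gives the lower bound rigorously.

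For tightness, I would exhibit the tree-connected architecture introduced in this section, whose graph $\mathcal{G}$ is a tree and therefore connected with exactly $N-1$ edges. By Lemma~\ref{lem:connected}, such a BD-RIS is MISO optimal, so the bound of $N-1$ is attained and cannot be improved. This shows $N-1$ is both a valid lower bound and achievable, establishing it as the minimum.

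The main obstacle here is conceptual rather than technical: the bulk of the difficulty has already been absorbed into Lemma~\ref{lem:connected}, so the proposition itself reduces to a well-known elementary fact. The only care required is to state cleanly why connectivity forces at least $N-1$ edges (the component-counting or spanning-tree argument) and to point to the tree-connected construction as the witness for tightness, rather than re-deriving any circuit-level or performance details.
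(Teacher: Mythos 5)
Your proposal is correct and shares the paper's overall structure: both reduce the proposition, via Lemma~\ref{lem:connected}, to the purely graph-theoretical fact that a connected graph on $N$ vertices has at least $N-1$ edges. The difference lies in how that fact is established. The paper argues top-down by strong induction: from a connected graph on $N$ vertices it removes edges (at least one) until the graph splits into exactly two connected components, of sizes $K$ and $N-K$, which by the induction hypothesis carry at least $K-1$ and $N-K-1$ edges, so the original graph had at least $(K-1)+(N-K-1)+1=N-1$ edges. Your component-counting argument is the bottom-up mirror of this: starting from $N$ isolated vertices, each added edge reduces the number of connected components by at most one, so at least $N-1$ edges are needed to reach a single component. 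Both rest on the same principle (one edge changes the component count by at most one), but yours avoids strong induction and is the more standard, self-contained argument; your alternative spanning-tree route is also valid, though it silently leans on two facts (every connected graph has a spanning tree, and every tree on $N$ vertices has exactly $N-1$ edges) that themselves need proof. One thing you do that the paper's appendix does not: you explicitly establish tightness by exhibiting the tree-connected architecture, whose graph is connected with exactly $N-1$ edges and hence MISO optimal by Lemma~\ref{lem:connected}. The paper's proof covers only the lower bound and defers achievability to the main text where tree-connected RIS is introduced; since the word ``minimum'' in the statement requires both halves, your version is the more complete reading of the claim.
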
 \begin{proof} Please
refer to Appendix~B. \end{proof}

\begin{figure}[t]
\centering{}\includegraphics[width=0.48\textwidth]{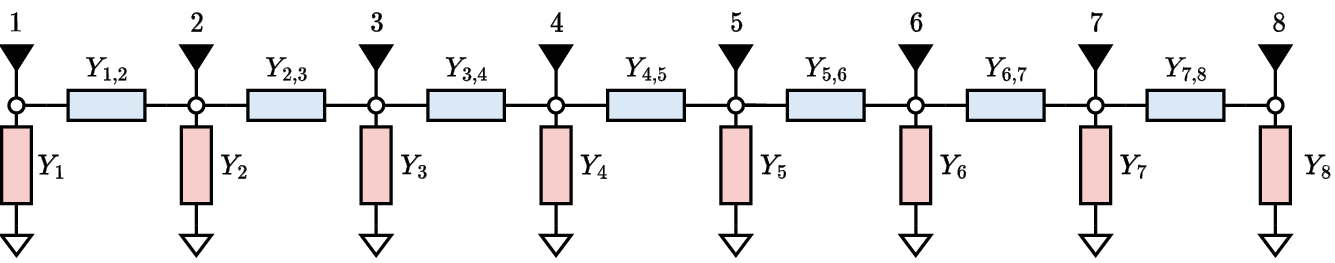}
\caption{Tridiagonal RIS with $N=8$.}
\label{fig:3diag} 
\end{figure}

Remarkably, a connected graph $\mathcal{G}$ on $N$ vertices with
$N-1$ edges is a tree. To show this, we observe that removing any
edge from $\mathcal{G}$ makes the graph disconnected because of Proposition~\ref{pro:N-1}.
Thus, $\mathcal{G}$ cannot have a cycle and must be a tree (connected
and acyclic). Therefore, to minimize the BD-RIS circuit complexity
while maintaining optimal performance, we propose tree-connected
RIS as BD-RIS whose corresponding graph is a tree. According to
this definition, tree-connected RIS includes $N$ admittance components
connecting each port to ground and $N-1$ admittance components interconnecting
the ports to each other, yielding a total of $2N-1$ admittance components,
which is much less than the $N(N+1)/2$ admittance components in fully-connected
RIS.

\subsection{Two Examples of Tree-Connected RIS}
\label{sec:examples}

Tree-connected RIS refers to a family of BD-RISs, including multiple
possible architectures. To clarify the concept of tree-connected RIS,
in this subsection we provide two specific examples of tree-connected
RIS, referred to as tridiagonal and arrowhead RIS. 

\subsubsection{Tridiagonal RIS}

We define tridiagonal RIS as tree-connected RIS whose graph $\mathcal{G}$
is a \textit{path graph.} In graph theory, the degree of a vertex
is defined as the number of edges incident with it, and a path graph
is defined as a graph where two vertices have \textit{degree} one
and all the others (if any) have degree two \cite{bon76}. The vertex
set of the graph associated with tridiagonal RIS is defined as
\begin{equation}
\mathcal{V}^{\mathrm{Tri}}=\left\{ n_{1},n_{2},\ldots,n_{N}\right\} ,\label{eq:Vtri}
\end{equation}
and the edge set can be expressed accordingly as
\begin{equation}
\mathcal{E}^{\mathrm{Tri}}=\left\{ \left(n_{\ell},n_{\ell+1}\right)|\:\ell=1,\ldots,N-1\right\} .\label{eq:Etri}
\end{equation}
Interestingly, the tridiagonal architectures have also been proposed
to realize low-complexity impedance matching networks for multiple
antenna systems \cite{she16}. In Fig.~\ref{fig:3diag}, we provide
an illustrative example of tridiagonal RIS with $N=8$. As a consequence
of such architecture, the susceptance matrix of tridiagonal RIS 
is symmetric and tridiagonal  ($[\mathbf{B}]_{i,j}=0$ if $\vert i-j\vert>1$),
written as
\begin{equation}
\mathbf{B}=\left[\begin{matrix}\left[\mathbf{B}\right]_{1,1} & \left[\mathbf{B}\right]_{1,2} & \cdots & 0\\
\left[\mathbf{B}\right]_{1,2} & \left[\mathbf{B}\right]_{2,2} & \ddots & \vdots\\
\vdots & \ddots & \ddots & \left[\mathbf{B}\right]_{N-1,N}\\
0 & \cdots & \left[\mathbf{B}\right]_{N-1,N} & \left[\mathbf{B}\right]_{N,N}
\end{matrix}\right].\label{eq:B-TRI}
\end{equation}
Note that if the port indexes in $\mathcal{V}^{\mathrm{Tri}}$ are
permuted, $\mathbf{B}$ is a permuted tridiagonal matrix.

\subsubsection{Arrowhead RIS}

\begin{figure}[t]
\centering{}\includegraphics[width=0.18\textwidth]{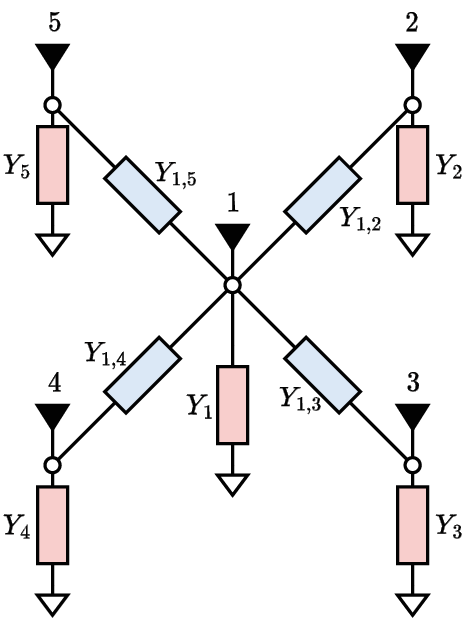}
\caption{Arrowhead RIS  with $N=5$ where port 1 is the central vertex.}
\label{fig:arrow} 
\end{figure}

We define arrowhead RIS as tree-connected RIS whose graph $\mathcal{G}$
is a \textit{star graph}. In graph theory, a star graph is defined
as a graph where at most one vertex, called \textit{central}, has
degree greater than one and all the others have degree one \cite{bon76}.
Accordingly, the vertex set of the graph associated with arrowhead
RIS is defined as
\begin{equation}
\mathcal{V}^{\mathrm{Arrow}}=\left\{ c,n_{1},n_{2},\ldots,n_{N-1}\right\} ,\label{eq:Varrow}
\end{equation}
where $c$ denotes the central vertex, while the edge set is consequently
expressed as
\begin{equation}
\mathcal{E}^{\mathrm{Arrow}}=\left\{ \left(c,n_{\ell}\right)|\:\ell=1,\ldots,N-1\right\} .\label{eq:Earrow}
\end{equation}
In Fig.~\ref{fig:arrow}, we provide an illustrative example of arrowhead
RIS with $N=5$ where port 1 is the central vertex. As a consequence,
the susceptance matrix of arrowhead RIS is symmetric and arrowhead
($[\mathbf{B}]_{i,j}=0$ if $i\neq c$, $j\neq c$, and $i\neq j$)
\cite{ole90}.  Assuming port 1 is the central vertex, the susceptance
matrix of arrowhead RIS can be written as 
\begin{equation}
\mathbf{B}=\left[\begin{matrix}\left[\mathbf{B}\right]_{1,1} & \left[\mathbf{B}\right]_{1,2} & \cdots & \left[\mathbf{B}\right]_{1,N}\\
\left[\mathbf{B}\right]_{1,2} & \left[\mathbf{B}\right]_{2,2} & \cdots & 0\\
\vdots & \vdots & \ddots & \vdots\\
\left[\mathbf{B}\right]_{1,N} & 0 & \cdots & \left[\mathbf{B}\right]_{N,N}
\end{matrix}\right].\label{eq:B-ARROW}
\end{equation}
Note that if the central vertex is not port 1, $\mathbf{B}$ is a
permuted arrowhead matrix.

It should be noted that all tree-connected RISs satisfy the necessary
condition to be MISO optimal and contain the same number of tunable
admittance components. Thus, tridiagonal and arrowhead RIS have the
same circuit complexity and MISO optimality. In practice, we can select
the suitable tree-connected RIS architecture according to the use
case.
On the one hand, tridiagonal RIS can be easily implemented with cables or tapes,
thanks to the path graph circuit topology, so it enables the
practical development of tree-connected RIS with \glspl{ula} or
radio stripes \cite{sha21}.
In addition, tridiagonal RIS is the tree-connected architecture that minimizes the length of the interconnections since it connects only adjacent RIS elements.
On the other hand, arrowhead BD-RIS is more suitable to develop tree-connected RIS with \glspl{upa}. 

\subsection{Forest-Connected RIS}

To further reduce the circuit complexity of tree-connected RIS, especially
for the large-scale case, we propose forest-connected RIS. A forest-connected
RIS is defined as a BD-RIS where the $N$ RIS elements are divided
into $G=N/N_{G}$ groups with group size $N_{G}$ and each group utilizes
the tree-connected architecture (i.e. the graph for each group is
a tree on $N_{G}$ vertices)\footnote{Note that the tree-connected RIS is a special case of the forest-connected RIS, i.e., with $G=1$. This is consistent with the fact that in graph theory a tree is a special case of forest.}. Therefore, the graph $\mathcal{G}$
associated with the forest-connected RIS is a forest, and the corresponding
vertex and edge sets can be represented as 
\begin{equation}
\mathcal{V}=\cup_{g=1}^{G}\mathcal{V}_{g},\:\mathcal{E}=\cup_{g=1}^{G}\mathcal{E}_{g},
\end{equation}
 where $\mathcal{V}_{g}$ and $\mathcal{E}_{g}$ denote the vertex
and edge sets of the graph associated with the $g$th group for $g=1,\ldots,G$,
with $\mathcal{V}_{g_{1}}\cap\mathcal{V}_{g_{2}}=\emptyset$ for all
$g_{1}\neq g_{2}$.

More specifically, the tree-connected architecture for each group
in the forest-connected RIS can be either tridiagonal or arrowhead.
When the tridiagonal architecture is used, the vertex and edge sets
of the graph associated with forest-connected RIS are defined as
\begin{equation}
\mathcal{V}_{\textrm{Forest}}^{\textrm{Tri}}=\cup_{g=1}^{G}\left\{ n_{g,1},n_{g,2},\ldots,n_{g,N_{G}}\right\} ,\label{eq:VFOREST-TRI}
\end{equation}
\begin{equation}
\mathcal{E}_{\textrm{Forest}}^{\textrm{Tri}}=\cup_{g=1}^{G}\left\{ \left(n_{g,\ell},n_{g,\ell+1}\right)|\:\ell=1,\ldots,N_{G}-1\right\} ,\label{eq:EFOREST-TRI}
\end{equation}
similarly to \eqref{eq:Vtri} and \eqref{eq:Etri}. Accordingly, the
susceptance matrix of forest-connected RIS with tridiagonal architecture
for each group is block diagonal with each block being symmetric and
tridiagonal. On the other hand, when the arrowhead architecture is
used, the vertex and edge sets of the graph associated with forest-connected
RIS are defined as
\begin{equation}
\mathcal{V}_{\textrm{Forest}}^{\textrm{Arrow}}=\cup_{g=1}^{G}\left\{ c_{g},n_{g,1},n_{g,2},\ldots,n_{g,N_{G}-1}\right\} ,\label{eq:VFOREST-ARR}
\end{equation}
\begin{equation}
\mathcal{E}_{\textrm{Forest}}^{\textrm{Arrow}}=\cup_{g=1}^{G}\left\{ \left(c_{g},n_{g,\ell}\right)|\:\ell=1,\ldots,N_{G}-1\right\} ,\label{eq:EFOREST-ARR}
\end{equation}
where $c_{g}$ is the index of the central vertex for the $g$th group,
similarly to \eqref{eq:Varrow} and \eqref{eq:Earrow}. Accordingly,
the susceptance matrix of forest-connected RIS with arrowhead architecture
for each group is block diagonal with each block being symmetric and
arrowhead.

The forest-connected RIS achieves a good performance-complexity trade-off
between the single-connected and the tree-connected RIS. Note that
the single-connected and tree-connected RIS can be viewed as two special
cases of the forest-connected RIS, with $N_{G}=1$ and $N_{G}=N$,
respectively. In addition, forest-connected RIS is equivalent to
group-connected RIS when $N_{G}=2$ \cite{she20}. The number of
tunable admittance components in forest-connected RIS with group
size $N_{G}$ is $N(2-1/N_{G})$. 

\section{Optimization of Tree- and Forest-Connected RIS for MISO Systems}
\label{sec:design-su-miso}

In this section, we optimize tree-connected and forest-connected RIS to
maximize the received signal power in \gls{miso} systems. For tree-connected
RIS, we show that any tree-connected RIS is MISO optimal and provide
a closed-form global optimal solution. For forest-connected RIS,
we provide a low-complexity iterative solution.

\subsection{Tree-Connected  RIS Optimization}

The tree-connected RIS optimization to maximize the received signal
power in MISO systems can be formulated as \eqref{eq:received power}-\eqref{eq:received power C4}
plus a constraint that graph $\mathcal{G}$ is a tree. Recall that
the received signal power in BD-RIS-aided MISO systems is upper-bounded
by $\bar{P}_{R}=P_{T}\left\Vert \mathbf{h}_{RI}\right\Vert ^{2}\left\Vert \mathbf{H}_{IT}\right\Vert ^{2}$.
In the following, we prove that there is always one and only one
solution for the optimal susceptance matrix $\mathbf{B}$ of tree-connected
RIS to achieve the upper bound $\bar{P}_{R}$, and propose an algorithm
to find such optimal $\mathbf{B}$.

Considering a RIS implemented through a reciprocal and lossless circuit, the key to achieving the upper bound $\bar{P}_{R}$ given by \eqref{eq:PR-UB}
is to find a symmetric unitary matrix $\boldsymbol{\Theta}$
satisfying 
\begin{equation}
\hat{\mathbf{h}}_{RI}^{H}=\boldsymbol{\Theta}\mathbf{u}_{IT},\label{eq:optimal fully}
\end{equation}
where $\hat{\mathbf{h}}_{RI}=\mathbf{h}_{RI}/\left\Vert \mathbf{h}_{RI}\right\Vert $
and $\mathbf{u}_{IT}=\mathbf{u}_{\text{max}}(\mathbf{H}_{IT})$ is
the dominant left singular vector of $\mathbf{H}_{IT}$\footnote{Condition \eqref{eq:optimal fully} is derived by considering
$P_{R}\leq P_{T}\underset{\left\|\mathbf{x}\right\|=1}{\mathsf{\mathrm{max}}}\;\left\|\mathbf{h}_{RI}\right\|^{2}\left\|\boldsymbol{\Theta}\mathbf{H}_{IT}\mathbf{x}\right\|^{2}\leq P_{T}\underset{\left\|\mathbf{x}\right\|=1}{\mathsf{\mathrm{max}}}\;\left\|\mathbf{h}_{RI}\right\|^{2}\left\|\boldsymbol{\Theta}\mathbf{H}_{IT}\right\|^{2}\left\|\mathbf{x}\right\|^{2}$.
Note that the equality holds in the two inequalities when $\hat{\mathbf{h}}_{RI}^H$ is equal to the dominant left singular vector of $\boldsymbol{\Theta}\mathbf{H}_{IT}$, i.e., $\boldsymbol{\Theta}\mathbf{u}_{IT}$.}.
Substituting \eqref{eq:T(B)} into \eqref{eq:optimal fully}, we can
equivalently rewrite \eqref{eq:optimal fully} as 
\begin{equation}
\left(\mathbf{I}+jZ_{0}\mathbf{B}\right)\hat{\mathbf{h}}_{RI}^{H}=\left(\mathbf{I}-jZ_{0}\mathbf{B}\right)\mathbf{u}_{IT},
\end{equation}
which  can be expressed in a compact form as 
\begin{equation}
\mathbf{B}\boldsymbol{\alpha}=\boldsymbol{\beta},\label{eq:system1}
\end{equation}
where $\boldsymbol{\alpha}=jZ_{0}\left(\mathbf{u}_{IT}+\hat{\mathbf{h}}_{RI}^{H}\right)\in\mathbb{C}^{N\times1}$
and $\boldsymbol{\beta}=\mathbf{u}_{IT}-\hat{\mathbf{h}}_{RI}^{H}\in\mathbb{C}^{N\times1}$.
The equation in \eqref{eq:system1} is composed of $N$ linear equations
with $2N-1$ unknowns. The linear equation coefficients $\boldsymbol{\alpha}$
and $\boldsymbol{\beta}$ are complex vectors, while the $2N-1$
unknowns are real since they are the entries of the susceptance matrix
$\mathbf{B}$ of the tree-connected RIS, which are not constrained to
be zero as shown in \eqref{eq:B-TRI} and \eqref{eq:B-ARROW}. We
aim to  solve \eqref{eq:system1} as it is a  sufficient and necessary
condition to achieve the performance upper bound $\bar{P}_{R}$. To
that end, we rewrite \eqref{eq:system1} as  $2N$ equations
with real coefficients and $2N-1$ real unknowns, i.e., 
\begin{equation}
\left[\begin{array}{c}
\mathbf{B}\\
\hdashline\mathbf{B}
\end{array}\right]\mathbf{a}=\mathbf{b},\label{eq:system2}
\end{equation}
where $\mathbf{a}\in\mathbb{R}^{2N\times1}$ and $\mathbf{b}\in\mathbb{R}^{2N\times1}$
are defined as 
\begin{equation}
\mathbf{a}=\left[\begin{array}{c}
\Re\left\{ \boldsymbol{\alpha}\right\} \\
\hdashline\Im\left\{ \boldsymbol{\alpha}\right\} 
\end{array}\right],\:\mathbf{b}=\left[\begin{array}{c}
\Re\left\{ \boldsymbol{\beta}\right\} \\
\hdashline\Im\left\{ \boldsymbol{\beta}\right\} 
\end{array}\right].\label{eq:b}
\end{equation}
To solve \eqref{eq:system2}, we further rewrite the equation such
that the $2N-1$ real unknowns are explicitly collected in a vector
$\mathbf{x}\in\mathbb{R}^{(2N-1)\times1}$. Specifically, given a
tree-connected RIS whose graph has an edge set $\mathcal{E}=\left\{ \left(n_{\ell},m_{\ell}\right)\right\} _{\ell=1}^{N-1}$
such as \eqref{eq:Etri} or \eqref{eq:Earrow}, the $2N-1$ real unknowns
are collected as
\begin{equation}
\mathbf{x}=\left[\left[\mathbf{B}\right]_{1,1},\ldots,\left[\mathbf{B}\right]_{N,N},\left[\mathbf{B}\right]_{n_{1},m_{1}},\ldots,\left[\mathbf{B}\right]_{n_{N-1},m_{N-1}}\right]^{T},\label{eq:x}
\end{equation}
where $\text{[\ensuremath{\mathbf{B}}]}_{n,n}$ for $n=1,\ldots,N$
denote the $N$ diagonal entries of $\mathbf{B}$ and $[\mathbf{B}]_{n_{\ell},m_{\ell}}$
for $\ell=1,\ldots,N-1$ denote the $N-1$ off-diagonal entries of
$\mathbf{B}$ specified by the edge set of graph. Accordingly, \eqref{eq:system2}
can be equivalently rewritten as 
\begin{equation}
\mathbf{A}\mathbf{x}=\mathbf{b},\label{eq:system3}
\end{equation}
 where $\mathbf{A}\in\mathbb{R}^{2N\times(2N-1)}$ is the coefficient
matrix given by 
\begin{equation}
\mathbf{A}=\left[\begin{array}{cc}
\Re\left\{ \mathbf{A}_{1}\right\}  & \Re\left\{ \mathbf{A}_{2}\right\} \\
\hdashline\Im\left\{ \mathbf{A}_{1}\right\}  & \Im\left\{ \mathbf{A}_{2}\right\} 
\end{array}\right],\label{eq:A}
\end{equation}
where $\mathbf{A}_{1}\in\mathbb{C}^{N\times N}$ and $\mathbf{A}_{2}\in\mathbb{C}^{N\times(N-1)}$.
Because the real and imaginary parts of $\mathbf{A}_{1}$ are multiplied
by the $N$ diagonal entries of $\mathbf{B}$ in \eqref{eq:system3},
 to build the equivalence between \eqref{eq:system2} and \eqref{eq:system3},
$\mathbf{A}_{1}$ should be given by 
\begin{equation}
\mathbf{A}_{1}=\text{diag}\left(\boldsymbol{\alpha}\right).\label{eq:A1}
\end{equation}
On the other hand, the real and imaginary parts of $\mathbf{A}_{2}$
are multiplied by the $N-1$  off-diagonal entries of $\mathbf{B}$
including $\left[\mathbf{B}\right]_{n_{1},m_{1}},\ldots,\left[\mathbf{B}\right]_{n_{N-1},m_{N-1}}$.
For this reason, the structure of $\mathbf{A}_{2}$ depends on how
the RIS ports are connected to each other, that is the specific graph
associated with the tree-connected RIS. Accordingly, the entry of
the $\ell$th column of $\mathbf{A}_{2}$ for $\ell=1,\ldots,N-1$
is given by 
\begin{equation}
\left[\mathbf{A}_{2}\right]{}_{k,\ell}=\begin{cases}
[\boldsymbol{\alpha}]_{n_{\ell}} & \textrm{if}\:k=m_{\ell}\\{}
[\boldsymbol{\alpha}]_{m_{\ell}} & \textrm{if}\:k=n_{\ell}\\
0 & \textrm{otherwise}
\end{cases}.\label{eq:A2}
\end{equation}

Remarkably, the constraint on the tree-connected architecture, i.e., the position of the non-zero elements of $\mathbf{B}$, is dealt with by properly constructing $\mathbf{A}_{2}$ as in \eqref{eq:A2}.

The following proposition provides a useful property of tree-connected
RIS, which we can leverage to solve the linear equations \eqref{eq:system3}.

\begin{proposition} For any tree-connected RIS with $N$  elements,
the coefficient matrix $\mathbf{A}\in\mathbb{R}^{2N\times(2N-1)}$
has full column rank, i.e., $r\left(\mathbf{A}\right)=2N-1$. \label{pro:rank1}
\end{proposition}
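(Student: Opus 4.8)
The plan is to prove full column rank by showing that $\mathbf{A}$ has trivial kernel, which for a matrix with $2N$ rows and $2N-1$ columns is equivalent to $r(\mathbf{A})=2N-1$. First I would translate the condition $\mathbf{A}\mathbf{x}=\mathbf{0}$ back into the native language of the problem: by the construction in \eqref{eq:system3}--\eqref{eq:A2}, the product $\mathbf{A}\mathbf{x}$ is precisely $[\Re\{\mathbf{B}\boldsymbol{\alpha}\};\Im\{\mathbf{B}\boldsymbol{\alpha}\}]$, where $\mathbf{B}$ is the symmetric tree-structured susceptance matrix whose $2N-1$ free entries are stacked in $\mathbf{x}$ as in \eqref{eq:x}. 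Hence $\mathbf{A}\mathbf{x}=\mathbf{0}$ is equivalent to $\mathbf{B}\boldsymbol{\alpha}=\mathbf{0}$, and the whole claim reduces to: the only tree-connected $\mathbf{B}$ annihilating $\boldsymbol{\alpha}$ is $\mathbf{B}=\mathbf{0}$.

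Second, I would establish this by a leaf-peeling induction on $N$ that directly exploits the tree topology. For $N=1$ we have $\mathbf{A}=[\Re\{[\boldsymbol{\alpha}]_1\};\Im\{[\boldsymbol{\alpha}]_1\}]$, which has rank $1=2N-1$ as soon as $[\boldsymbol{\alpha}]_1\neq0$. For the inductive step I would use the standard fact \cite{bon76} that every finite tree with at least two vertices has a leaf, say vertex $v$ with unique neighbour $u$. Since $v$ is a leaf, row $v$ of $\mathbf{B}$ has only the two nonzero entries $\left[\mathbf{B}\right]_{v,v}$ and $\left[\mathbf{B}\right]_{v,u}$, so the $v$-th scalar equation of $\mathbf{B}\boldsymbol{\alpha}=\mathbf{0}$ reads $\left[\mathbf{B}\right]_{v,v}[\boldsymbol{\alpha}]_v+\left[\mathbf{B}\right]_{v,u}[\boldsymbol{\alpha}]_u=0$. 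As the two coefficients are real, this single complex equation forces $\left[\mathbf{B}\right]_{v,v}=\left[\mathbf{B}\right]_{v,u}=0$ whenever $[\boldsymbol{\alpha}]_v$ and $[\boldsymbol{\alpha}]_u$ are linearly independent over $\mathbb{R}$. Eliminating these entries (and, by symmetry, $\left[\mathbf{B}\right]_{u,v}$) and deleting $v$ leaves a tree on $N-1$ vertices whose submatrix $\mathbf{B}'$ satisfies $\mathbf{B}'\boldsymbol{\alpha}'=\mathbf{0}$, with $\boldsymbol{\alpha}'$ the restriction of $\boldsymbol{\alpha}$; the inductive hypothesis then yields $\mathbf{B}'=\mathbf{0}$, so $\mathbf{B}=\mathbf{0}$ and the kernel is trivial.

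The hard part will be justifying the $\mathbb{R}$-linear independence of $[\boldsymbol{\alpha}]_v$ and $[\boldsymbol{\alpha}]_u$ at every edge encountered during peeling, since this is exactly what allows each leaf equation to annihilate two unknowns simultaneously. Recalling $\boldsymbol{\alpha}=jZ_0(\mathbf{u}_{IT}+\hat{\mathbf{h}}_{RI}^H)$, two entries of $\boldsymbol{\alpha}$ are real multiples of one another only when the corresponding entries of $\mathbf{u}_{IT}+\hat{\mathbf{h}}_{RI}^H$ lie on a common line through the origin of $\mathbb{C}$, which carves out a measure-zero set of channel realizations. I would therefore argue that, under the continuous channel model, this independence holds simultaneously across all $N-1$ edges with probability one, so that $\mathbf{A}$ attains full column rank for almost every channel. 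I would also flag that this genericity is genuinely needed: for a purely real channel every $[\boldsymbol{\alpha}]_n$ becomes purely imaginary, all pairs turn $\mathbb{R}$-dependent, and the rank can drop—so the statement is to be read as holding generically rather than for literally every channel realization.
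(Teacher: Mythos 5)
Your proof is correct, and it reaches the same conclusion by a genuinely cleaner route than the paper's. The paper also proceeds by induction on $N$ with the same generic-channel caveat, but it builds the tree \emph{up} rather than peeling it \emph{down}: starting from a base case at $N=2$, it attaches a new port to the $(N-1)$th port of an $(N-1)$-element tree-connected RIS and tracks how the stacked real coefficient matrix \eqref{eq:A(N)} decomposes, after row and column swaps, into the block form \eqref{eq:A(N)2}; full column rank of $\mathbf{A}^{(N)}$ then follows from that of $\mathbf{A}^{(N-1)}$ together with the $\mathbb{R}$-linear independence of $\bigl[\Re\{[\boldsymbol{\alpha}]_{N}\},\Im\{[\boldsymbol{\alpha}]_{N}\}\bigr]$ and $\bigl[\Re\{[\boldsymbol{\alpha}]_{N-1}\},\Im\{[\boldsymbol{\alpha}]_{N-1}\}\bigr]$ --- exactly the edge-wise genericity your leaf equations require, so the probabilistic content of the two arguments is identical. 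What you do differently is twofold. First, you observe that by construction $\mathbf{A}\mathbf{x}$ stacks $\Re\{\mathbf{B}\boldsymbol{\alpha}\}$ and $\Im\{\mathbf{B}\boldsymbol{\alpha}\}$, so full column rank is equivalent to ``no nonzero tree-structured $\mathbf{B}$ annihilates $\boldsymbol{\alpha}$''; this keeps the whole argument in the complex domain and dispenses with the paper's explicit bookkeeping of $2N\times(2N-1)$ real block matrices. Second, your leaf-peeling works on an arbitrary labelled tree via the existence of a leaf, whereas the paper needs its ``with no loss of generality'' assumption that the attached port connects to port $N-1$; the underlying mechanism --- one edge equation with two real unknowns and a complex (i.e., $\mathbb{R}^{2}$-valued) right-hand side killing both unknowns when the two $\boldsymbol{\alpha}$-entries are $\mathbb{R}$-independent --- is the same in both, and the two inductions are mirror images. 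Finally, your explicit flag that the result is \emph{generic} rather than universal, with the concrete degenerate case of real channels forcing every $[\boldsymbol{\alpha}]_{n}$ onto the imaginary axis (which zeroes the top $N$ rows of $\mathbf{A}$ and collapses the rank to at most $N$), is a point the paper only acknowledges implicitly through the phrases ``in general'' and ``linearly independent with probability 1''; stating the proposition as holding for almost every channel realization, as you do, is the more accurate reading.
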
 \begin{proof} Please refer to Appendix~C. \end{proof}

According to Proposition~\ref{pro:rank1}, the linear equations \eqref{eq:system3}
has one solution or no solution \cite{str93}. Furthermore, we show
that it has exactly one solution by using the following proposition.

\begin{proposition}  For any tree-connected RIS with $N$ elements,
the augmented matrix $\left[\mathbf{A}|\mathbf{b}\right]\in\mathbb{R}^{2N\times2N}$
has rank $r\left(\left[\mathbf{A}|\mathbf{b}\right]\right)=2N-1$.
\label{pro:rank2} \end{proposition}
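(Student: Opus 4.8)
The plan is to show that the $2N$ rows of the square matrix $[\mathbf{A}|\mathbf{b}]$ are linearly dependent, and to combine this with Proposition~\ref{pro:rank1} for the matching lower bound. Since $r(\mathbf{A})=2N-1$ by Proposition~\ref{pro:rank1}, we immediately have $r([\mathbf{A}|\mathbf{b}])\ge 2N-1$; to obtain equality it suffices to exhibit a single nonzero vector $\mathbf{v}\in\mathbb{R}^{2N}$ with $\mathbf{v}^{T}[\mathbf{A}|\mathbf{b}]=\mathbf{0}^{T}$, which forces $\det[\mathbf{A}|\mathbf{b}]=0$ and hence $r([\mathbf{A}|\mathbf{b}])\le 2N-1$. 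Equivalently, by the Rouch\'e--Capelli theorem this certifies that the system \eqref{eq:system3} is consistent; together with the full column rank from Proposition~\ref{pro:rank1}, \eqref{eq:system3} then has exactly one solution, which is the existence-and-uniqueness conclusion the section targets.

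The crux is to identify $\mathbf{v}$, for which there is no mechanical recipe. I claim it is the real embedding of $j\boldsymbol{\alpha}$, namely $\mathbf{v}=[-\Im\{\boldsymbol{\alpha}\}^{T},\ \Re\{\boldsymbol{\alpha}\}^{T}]^{T}$, corresponding to the complex vector $\mathbf{v}_{c}:=j\boldsymbol{\alpha}$. I would first verify $\mathbf{v}^{T}\mathbf{A}=\mathbf{0}^{T}$ column by column using the block form \eqref{eq:A} and the structure \eqref{eq:A1}--\eqref{eq:A2}. The $N$ columns carrying the diagonal entries of $\mathbf{B}$ enforce $\Re\{[\boldsymbol{\alpha}]_{n}^{*}[\mathbf{v}_{c}]_{n}\}=0$, which for $\mathbf{v}_{c}=j\boldsymbol{\alpha}$ reads $\Re\{j|[\boldsymbol{\alpha}]_{n}|^{2}\}=0$, while the $N-1$ columns carrying the edge entries enforce $\Re\{[\boldsymbol{\alpha}]_{n_{\ell}}^{*}[\mathbf{v}_{c}]_{m_{\ell}}+[\boldsymbol{\alpha}]_{m_{\ell}}^{*}[\mathbf{v}_{c}]_{n_{\ell}}\}=0$, which becomes $\Re\{2j\Re\{[\boldsymbol{\alpha}]_{n_{\ell}}^{*}[\boldsymbol{\alpha}]_{m_{\ell}}\}\}=0$; both hold identically. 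Since $\boldsymbol{\alpha}\ne\mathbf{0}$ (otherwise $\mathbf{A}=\mathbf{0}$, contradicting Proposition~\ref{pro:rank1}), the candidate $\mathbf{v}$ is nonzero.

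It then remains to show $\mathbf{v}^{T}\mathbf{b}=0$. Writing the pairing in complex form I expect $\mathbf{v}^{T}\mathbf{b}=\Re\{\boldsymbol{\beta}^{H}\mathbf{v}_{c}\}=\Re\{j\boldsymbol{\beta}^{H}\boldsymbol{\alpha}\}=-\Im\{\boldsymbol{\beta}^{H}\boldsymbol{\alpha}\}$, so the claim reduces to $\boldsymbol{\beta}^{H}\boldsymbol{\alpha}\in\mathbb{R}$. Substituting $\boldsymbol{\alpha}=jZ_{0}(\mathbf{u}_{IT}+\hat{\mathbf{h}}_{RI}^{H})$ and $\boldsymbol{\beta}=\mathbf{u}_{IT}-\hat{\mathbf{h}}_{RI}^{H}$ and expanding, the norm terms contribute $jZ_{0}(\|\mathbf{u}_{IT}\|^{2}-\|\hat{\mathbf{h}}_{RI}^{H}\|^{2})=0$ because both vectors are unit-norm, and the cross terms contribute $jZ_{0}(\mathbf{u}_{IT}^{H}\hat{\mathbf{h}}_{RI}^{H}-\hat{\mathbf{h}}_{RI}\mathbf{u}_{IT})=-2Z_{0}\Im\{\mathbf{u}_{IT}^{H}\hat{\mathbf{h}}_{RI}^{H}\}\in\mathbb{R}$. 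Hence $\boldsymbol{\beta}^{H}\boldsymbol{\alpha}$ is real, $\mathbf{v}^{T}\mathbf{b}=0$, and therefore $\mathbf{v}^{T}[\mathbf{A}|\mathbf{b}]=\mathbf{0}^{T}$, yielding $r([\mathbf{A}|\mathbf{b}])=2N-1$.

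The main obstacle is exactly the guess in the second step: identifying that $j\boldsymbol{\alpha}$ spans the one-dimensional left null space of $\mathbf{A}$, and then carrying the column-by-column verification through the block structure \eqref{eq:A}--\eqref{eq:A2} without sign or conjugation slips. Everything afterwards is light, since the cancellation $\mathbf{v}^{T}\mathbf{b}=0$ is merely the reality of $\boldsymbol{\beta}^{H}\boldsymbol{\alpha}$, which is precisely the unit-norm normalization of $\mathbf{u}_{IT}$ and $\hat{\mathbf{h}}_{RI}$ underlying the attainability of $\bar{P}_{R}$ in \eqref{eq:PR-UB}. It is worth noting that the tree hypothesis enters only through Proposition~\ref{pro:rank1}, which forces $\mathbf{A}$ to have exactly $2N-1$ columns and hence a left null space of dimension one; the identity $\mathbf{A}^{T}(j\boldsymbol{\alpha})=\mathbf{0}$ itself holds for any graph.
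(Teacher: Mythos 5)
Your proof is correct, and its mathematical core coincides with the paper's: the left null vector $\mathbf{v}=[-\Im\{\boldsymbol{\alpha}\}^{T},\ \Re\{\boldsymbol{\alpha}\}^{T}]^{T}$ you exhibit is exactly the vanishing linear combination of rows that the paper constructs by explicit row operations in Appendix~D (subtract $\tfrac{\Im\{[\boldsymbol{\alpha}]_{n}\}}{\Re\{[\boldsymbol{\alpha}]_{n}\}}\mathbf{r}_{n}$ from $\mathbf{r}_{N+n}$, rescale by $\Re\{[\boldsymbol{\alpha}]_{n}\}$, then sum the bottom rows), and both arguments reduce the final cancellation against $\mathbf{b}$ to the unit-norm identity $\Vert\hat{\mathbf{h}}_{RI}\Vert^{2}=\Vert\mathbf{u}_{IT}\Vert^{2}=1$. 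Still, your packaging differs in two ways that have substance. First, the paper's elementary operations divide by, and later multiply rows by, $\Re\{[\boldsymbol{\alpha}]_{n}\}$, which silently assumes $\Re\{[\boldsymbol{\alpha}]_{n}\}\neq 0$ for every $n$ (a probability-one but unstated genericity condition; multiplying a row by zero is not rank-preserving); your verification that the real embedding of $j\boldsymbol{\alpha}$ annihilates every column of $[\mathbf{A}|\mathbf{b}]$ via $\Re\{j|[\boldsymbol{\alpha}]_{n}|^{2}\}=0$ and the reality of $\boldsymbol{\beta}^{H}\boldsymbol{\alpha}$ needs no such assumption. Second, you make explicit that the matching lower bound $r([\mathbf{A}|\mathbf{b}])\geq 2N-1$ comes from Proposition~\ref{pro:rank1}; the paper's closing claim that the reduced matrix has rank $2N-1$ ``since its $2N$th row is all-zero'' only delivers the upper bound and likewise needs Proposition~\ref{pro:rank1} to close. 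Your final remark is also accurate and clarifying: the identity $\mathbf{v}^{T}[\mathbf{A}|\mathbf{b}]=\mathbf{0}^{T}$ holds for the coefficient matrix of any graph, and the tree hypothesis enters only through Proposition~\ref{pro:rank1}, i.e., in guaranteeing that $\mathbf{A}$ has $2N-1$ columns and full column rank so that the solution exists and is unique.
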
 \begin{proof} Please refer
to Appendix~D. \end{proof}

\begin{algorithm}[t]
\begin{algorithmic}[1]
\setstretch{1.2}
\REQUIRE $\mathbf{h}_{RI}\in\mathbb{C}^{1\times N}$, $\mathbf{H}_{IT}\in\mathbb{C}^{N\times M}$.
\ENSURE $\mathbf{B}$.
\STATE{Obtain $\hat{\mathbf{h}}_{RI}=\frac{\mathbf{h}_{RI}}{\left\Vert\mathbf{h}_{RI}\right\Vert}$, $\mathbf{u}_{IT}=\mathbf{u}_{\text{max}}\left(\mathbf{H}_{IT}\right)$.}
\STATE{Obtain $\boldsymbol{\alpha}= jZ_{0}\left(\mathbf{u}_{IT}+\hat{\mathbf{h}}_{RI}^H\right)$, $\boldsymbol{\beta}=\mathbf{u}_{IT}-\hat{\mathbf{h}}_{RI}^H$.}
\STATE{Obtain $\mathbf{A}_1$ and $\mathbf{A}_2$ by \eqref{eq:A1} and \eqref{eq:A2}, respectively.}
\STATE{Obtain $\mathbf{A}$ and $\mathbf{b}$ by \eqref{eq:A} and \eqref{eq:b}, respectively.}
\STATE{Obtain $\mathbf{x}=\left(\mathbf{A}^T\mathbf{A}\right)^{-1}\mathbf{A}^T\mathbf{b}$.}
\STATE{Obtain $\mathbf{B}$ by \eqref{eq:x}.}
\end{algorithmic}
\caption{Closed-form global optimization of tree-connected RIS for MISO systems}
\label{alg:B-design}
\end{algorithm}

According to the Rouche-Capelli theorem \cite{str93}, a system of
linear equations is consistent if the rank of the coefficient matrix
$\mathbf{A}$ is equal to the rank of the augmented matrix $\left[\mathbf{A}|\mathbf{b}\right]$.
Thus, according to Proposition~\ref{pro:rank2}, the system of linear
equations \eqref{eq:system3} is consistent and has exactly one solution.
As a result, we can find the solution of \eqref{eq:system3} as
$\mathbf{x}=\mathbf{A}^{\dagger}\mathbf{b}$, where $\mathbf{A}^{\dagger}=\left(\mathbf{A}^{T}\mathbf{A}\right)^{-1}\mathbf{A}^{T}$
is the Moore-Penrose pseudo-inverse of $\mathbf{A}$, and subsequently
the optimal susceptance matrix $\mathbf{B}$ of the tree-connected
RIS can be found through \eqref{eq:x}. 

To conclude, we prove that for any tree-connected RIS there is always
one and only one global optimal solution for the susceptance matrix
$\mathbf{B}$ to achieve the upper bound $\bar{P}_{R}$ for any channel
realization.  In Alg.~\ref{alg:B-design}, we summarize the steps
required to optimize tree-connected RIS for \gls{miso} systems.
Interestingly, Alg.~\ref{alg:B-design} can be used to globally optimize tree-connected RISs in single-user \gls{mimo} and multi-user \gls{miso} systems.
More precisely, in single-user \gls{mimo} systems using single-stream transmission, the received signal power is maximized by applying Alg.~\ref{alg:B-design} to the vectors $\mathbf{v}_{RI}$ and $\mathbf{u}_{IT}$, with $\mathbf{v}_{RI}$ being the dominant right singular vector of $\mathbf{H}_{RI}$.
In addition, in multi-user \gls{miso} systems, the sum power is maximized by applying Alg.~\ref{alg:B-design} to $\mathbf{t}_{RI}$ and $\mathbf{u}_{IT}$, with $\mathbf{t}_{RI}$ being the dominant right singular vector of $\mathbf{G}_{RI}=\left[\mathbf{h}_{RI,1}^H,\ldots,\mathbf{h}_{RI,U}^H\right]^H$, where $\mathbf{h}_{RI,u}\in\mathbb{C}^{1\times N}$ denotes the channel from the RIS to the $u$th receiver.
The computational complexity of Alg.~\ref{alg:B-design} is driven by the complexity of Step~5, i.e., it is the complexity of computing the inverse of $\mathbf{A}^T\mathbf{A}$.
Since $\mathbf{A}^T\mathbf{A}\in\mathbb{R}^{(2N-1)\times(2N-1)}$, the computational complexity of Alg.~\ref{alg:B-design} is $\mathcal{O}(8N^3)$.

\subsection{Forest-Connected RIS Optimization}

The forest-connected RIS optimization to maximize the received signal
power in MISO systems is formulated as

\begin{align}
\underset{\boldsymbol{\Theta},\mathbf{w}}{\mathsf{\mathrm{max}}}\;\; & P_{T}\left\vert \mathbf{h}_{RI}\boldsymbol{\Theta}\mathbf{H}_{IT}\mathbf{w}\right\vert ^{2}\label{eq:received power FOREST O}\\
\mathsf{\mathrm{s.t.}}\;\;\; & \boldsymbol{\Theta}=\left(\mathbf{I}+jZ_{0}\mathbf{B}\right)^{-1}\left(\mathbf{I}-jZ_{0}\mathbf{B}\right),\label{eq:F C1}\\
 & \mathbf{B}=\mathrm{diag}\left(\mathbf{B}_{1},\ldots,\mathbf{B}_{G}\right),\label{eq:F C2}\\
 & \mathbf{B}_{g}=\mathbf{B}_{g}^{T},\:\mathbf{B}_{g}\in\mathcal{B}_{\mathcal{G},g},\:\forall g,\label{eq:F C4}\\
 & \Vert\mathbf{w}\Vert=1,\label{eq:F C5}
\end{align}
where the susceptance matrix $\mathbf{B}$ of the forest-connected
RIS is a block diagonal matrix with $\mathbf{B}_{g}$ being the $g$th
block and $\mathcal{B}_{\mathcal{G},g}$ represents the set of the
$g$th block written as
\begin{equation}
\mathcal{B}_{\mathcal{G},g}=\left\{ \mathbf{B}_{g}|\left[\mathbf{B}_{g}\right]{}_{n,m}=0,n\neq m,n,m\in\mathcal{V}_{g},\left(n,m\right)\notin\mathcal{E}_{g}\right\} ,
\end{equation}
where $\mathcal{V}_{g}$ and $\mathcal{E}_{g}$ are the vertex and
edge sets of the graph for the $g$th group for $g=1,\ldots,G$. Specifically,
$\mathcal{V}_{g}$ and $\mathcal{E}_{g}$ can be found by \eqref{eq:VFOREST-TRI}
and \eqref{eq:EFOREST-TRI} when the tridiagonal architecture is used
for the $g$th group, or by \eqref{eq:VFOREST-ARR} and \eqref{eq:EFOREST-ARR}
when the arrowhead architecture is used.

Different from optimizing the tree-connected RIS, it is hard to find
a  tight upper bound on the received signal power  in forest-connected
RIS-aided \gls{miso} systems due to block diagonal characteristics
of the susceptance matrix, which makes it hard to  derive a closed-form
global optimal solution. To solve the challenging forest-connected
RIS optimization problem,  we propose an iterative approach,
where the precoder $\mathbf{w}$ and the RIS susceptance matrix $\mathbf{B}$
are alternatively optimized.

\subsubsection{Optimizing $\mathbf{B}$ with Fixed $\mathbf{w}$}

When the precoder $\mathbf{w}$ is fixed, we introduce $\mathbf{h}_{IT}^{\mathrm{eff}}=\mathbf{H}_{IT}\mathbf{w}$
as the effective channel between the transmitter and RIS so that the
forest-connected RIS-aided MISO system is equivalently converted to
a forest-connected RIS-aided SISO system. Thus, we can equivalently
rewrite the problem \eqref{eq:received power FOREST O}-\eqref{eq:F C5}
as
\begin{align}
\underset{\boldsymbol{\Theta}_{g}}{\mathsf{\mathrm{max}}}\;\; & P_{T}\left\vert \sum_{g=1}^{G}\mathbf{h}_{RI,g}\boldsymbol{\Theta}_{g}\mathbf{h}_{IT,g}^{\mathrm{eff}}\right\vert ^{2}\label{eq:received power FOREST O-E1}\\
\mathsf{\mathrm{s.t.}}\;\;\; & \boldsymbol{\Theta}_{g}=\left(\mathbf{I}+jZ_{0}\mathbf{B}_{g}\right)^{-1}\left(\mathbf{I}-jZ_{0}\mathbf{B}_{g}\right),\label{eq:F C1-E1}\\
 & \mathbf{B}_{g}=\mathbf{B}_{g}^{T},\:\mathbf{B}_{g}\in\mathcal{B}_{\mathcal{G},g},\:\forall g,\label{eq:F C4-E1}\\
 & \Vert\mathbf{w}\Vert=1,\label{eq:F C5-E1}
\end{align}
where $\mathbf{h}_{RI,g}\in\mathbb{C}^{1\times N_{G}}$ and $\mathbf{h}_{IT,g}^{\mathrm{eff}}\in\mathbb{C}^{N_{G}\times1}$
consist of the $N_{G}$ entries of $\mathbf{h}_{RI}$ and $\mathbf{h}_{IT}^{\mathrm{eff}}$
corresponding to the RIS elements of the $g$th group, respectively
\cite{she20}. To solve the problem \eqref{eq:received power FOREST O-E1}-\eqref{eq:F C5-E1},
we first consider the case that the graph for each group is complete,
that is the group-connected RIS where $\mathbf{B}_{g}$ can be arbitrary
symmetric matrix. As shown in \cite{she20} and \cite{ner22}, the
received signal power in the group-connected RIS-aided SISO system,
i.e. $P_{T}\left\vert \sum_{g=1}^{G}\mathbf{h}_{RI,g}\boldsymbol{\Theta}_{g}\mathbf{h}_{IT,g}^{\mathrm{eff}}\right\vert ^{2}$,
is upper bounded by
\begin{equation}
\bar{P}_{R}^{\mathrm{eff}}=P_{T}\left(\sum_{g=1}^{G}\left\Vert \mathbf{h}_{RI,g}\right\Vert \left\Vert \mathbf{h}_{IT,g}^{\mathrm{eff}}\right\Vert \right)^{2}.\label{eq:PR-UB-group}
\end{equation}
 The key to achieve such upper bound  \eqref{eq:PR-UB-group} is to
find complex symmetric unitary matrices $\boldsymbol{\Theta}_{g}$
for $g=1,\ldots,G$ satisfying
\begin{equation}
\hat{\mathbf{h}}_{RI,g}^{H}=\boldsymbol{\Theta}_{g}\hat{\mathbf{h}}_{IT,g}^{\mathrm{eff}},\:\forall g,\label{eq:optimal group}
\end{equation}
where $\hat{\mathbf{h}}_{RI,g}=\mathbf{h}_{RI,g}/\Vert\mathbf{h}_{RI,g}\Vert$
and $\hat{\mathbf{h}}_{IT,g}^{\mathrm{eff}}=\mathbf{h}_{IT,g}^{\mathrm{eff}}/\Vert\mathbf{h}_{IT,g}^{\mathrm{eff}}\Vert$.
Recall that each group in the forest-connected RIS utilizes the tree-connected
architecture. Thus, referring to the tree-connected RIS optimization
shown in Section IV.A, it is straightforward to show that for each
group of the forest-connected RIS there is always one and only one
solution for $\mathbf{B}_{g}$ whose corresponding $\boldsymbol{\Theta}_{g}$
satisfying \eqref{eq:optimal group}, so that the upper bound \eqref{eq:PR-UB-group}
can be achieved. Moreover,  the optimal $\mathbf{B}_{g}$ for each
group can be found by Alg.~\ref{alg:B-design} with the input of
the channels $\mathbf{h}_{RI,g}$ and $\mathbf{h}_{IT,g}^{\mathrm{eff}}$.
Therefore,  the susceptance matrix $\mathbf{B}$ of the forest-connected
RIS can be globally optimized to  achieve the upper bound \eqref{eq:PR-UB-group}
when $\mathbf{w}$ is fixed.

\begin{algorithm}[t]
\begin{algorithmic}[1]
\setstretch{1.2}
\REQUIRE $\mathbf{h}_{RI}\in\mathbb{C}^{1\times N}$, $\mathbf{H}_{IT}\in\mathbb{C}^{N\times M}$.
\ENSURE $\mathbf{B}$ and $\mathbf{w}$.
\STATE{Initialize $\mathbf{w}$.}
\WHILE{no convergence of objective \eqref{eq:received power FOREST O}}
\STATE{Update $\mathbf{B}_g$ by Alg. 1 with input $\mathbf{h}_{RI,g}$, $\mathbf{h}_{IT,g}^{\mathrm{eff}}$, $\forall g$.}
\STATE{Obtain $\boldsymbol{\Theta}$ from $\mathbf{B}=\mathrm{diag}\left(\mathbf{B}_{1},\ldots,\mathbf{B}_{G}\right)$ by \eqref{eq:T(B)}.}
\STATE{Update $\mathbf{w}=\left(\mathbf{h}_{RI}\boldsymbol{\Theta}\mathbf{H}_{IT}\right)^H/\Vert\mathbf{h}_{RI}\boldsymbol{\Theta}\mathbf{H}_{IT}\Vert$.}
\ENDWHILE
\end{algorithmic}
\caption{Optimization of forest-connected RIS for MISO systems}
\label{alg:B-design-forest}
\end{algorithm}

\subsubsection{Optimizing $\mathbf{w}$ with Fixed $\mathbf{B}$}

When the RIS susceptance matrix  $\mathbf{B}$ is fixed, it is obvious
that the optimal $\mathbf{w}$ is the MRT given by  $\mathbf{w}=(\mathbf{h}_{RI}\boldsymbol{\Theta}\mathbf{H}_{IT})^{H}/\left\Vert \mathbf{h}_{RI}\boldsymbol{\Theta}\mathbf{H}_{IT}\right\Vert $,
where $\boldsymbol{\Theta}$ can be computed by \eqref{eq:T(B)} as
a function of $\mathbf{B}$. 

The precoder $\mathbf{w}$ and the RIS susceptance matrix $\mathbf{B}$
are alternatively optimized until convergence of the objective \eqref{eq:received power FOREST O}.
We summarize the steps to optimize forest-connected RIS in MISO systems
in Alg.~\ref{alg:B-design-forest}. The iterative algorithm is initialized
by randomly setting $\mathbf{w}$ to a feasible value. In addition,
the convergence of Alg.~\ref{alg:B-design-forest} is guaranteed
by the following two facts. First, at each iteration, the objective
given by the received signal power $P_{R}$ is non-decreasing. Second,
the objective function is upper bounded by $P_{T}\left\Vert \mathbf{h}_{RI}\right\Vert ^{2}\left\Vert \mathbf{H}_{IT}\right\Vert ^{2}$
because of the sub-multiplicativity of the spectral norm.
The computational complexity per iteration of Alg.~\ref{alg:B-design-forest} is driven by the complexity of Step~3, i.e., it is the complexity of applying $N/N_G$ times Alg.~\ref{alg:B-design} to an $N_G$-element BD-RIS.
Thus, the computational complexity per iteration of Alg.~\ref{alg:B-design-forest} is $\mathcal{O}(8NN_G^2)$.

\section{Performance Evaluation}
\label{sec:results}

In this section, we provide the performance evaluation for the proposed
tree- and forest-connected RIS. We consider a two-dimensional coordinate
system, as shown in Fig. \ref{fig:system}.
The transmitter and receiver
are located at $(0,0)$ and $(52,0)$ in meters (m), respectively.
The RIS is located at $(50,2)$ and it is equipped with $N$ elements.
For the large-scale path loss, we use the distance-dependent path
loss  modeled as $L_{ij}(d_{ij})=L_{0}(d_{ij}/D_{0})^{-\alpha_{ij}}$,
where $L_{0}$ is the reference path loss at distance $D_{0}=1$ m,
$d_{ij}$ is the distance, and $\alpha_{ij}$ is the path loss exponent
for $ij\in\{RI,IT\}$. We set $L_{0}=-30$ dB, $\alpha_{RI}=2.8$,
$\alpha_{IT}=2$, and $P_{T}=10$ mW. For the small-scale fading,
we assume that the channel from the RIS to the receiver is Rayleigh
fading. The channel from the transmitter to the RIS is modeled with
 Rician fading, given by 
\begin{equation}
\mathbf{H}_{IT}=\sqrt{L_{IT}}\left(\sqrt{\frac{K}{1+K}}\mathbf{H}_{IT}^{\mathrm{LoS}}+\sqrt{\frac{1}{1+K}}\mathbf{H}_{IT}^{\mathrm{NLoS}}\right),\label{eq:Rician Channel Model}
\end{equation}
where $K$ refers to the Rician factor, while $\mathbf{H}_{IT}^{\mathrm{LoS}}$
and $\mathrm{vec}(\mathbf{H}_{IT}^{\mathrm{NLoS}})\sim\mathcal{CN}\left(\boldsymbol{0},\mathbf{I}\right)$
represent the small-scale \gls{los} and \gls{nlos} (Rayleigh
fading) components, respectively. We consider two scenarios, with
Rician factor $K=0$ dB and $K=10$ dB.

\begin{figure}[t]
\centering{}
\includegraphics[width=0.39\textwidth]{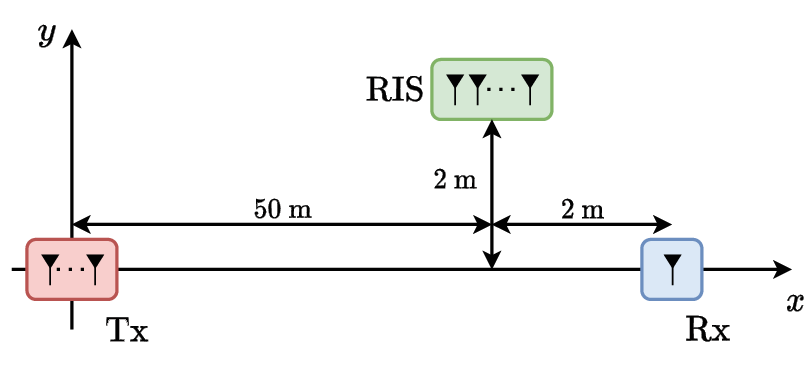}
\caption{Two-dimensional coordinate system for the RIS-aided MISO system.}
\label{fig:system} 
\end{figure}

\begin{figure*}[t]
\centering{}
\includegraphics[width=0.39\textwidth]{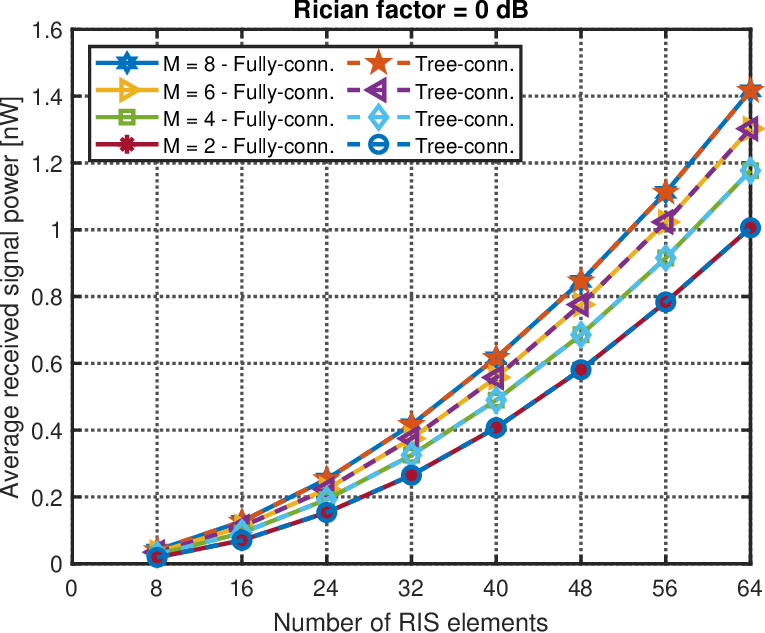}
\includegraphics[width=0.39\textwidth]{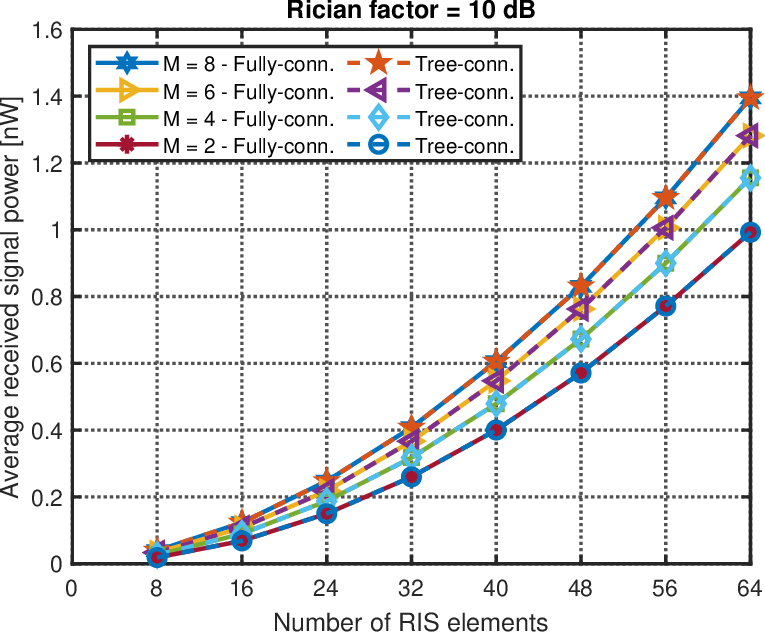}
\caption{Received signal power in MISO systems aided by fully- and tree-connected
RIS.}
\label{fig:perf-miso-tree}
\end{figure*}

\begin{figure*}[t]
\centering{}
\includegraphics[width=0.39\textwidth]{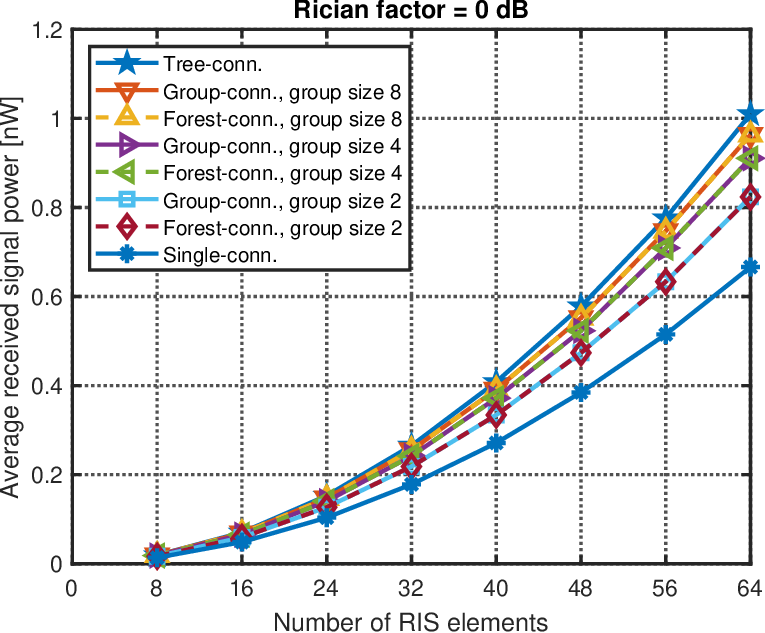}
\includegraphics[width=0.39\textwidth]{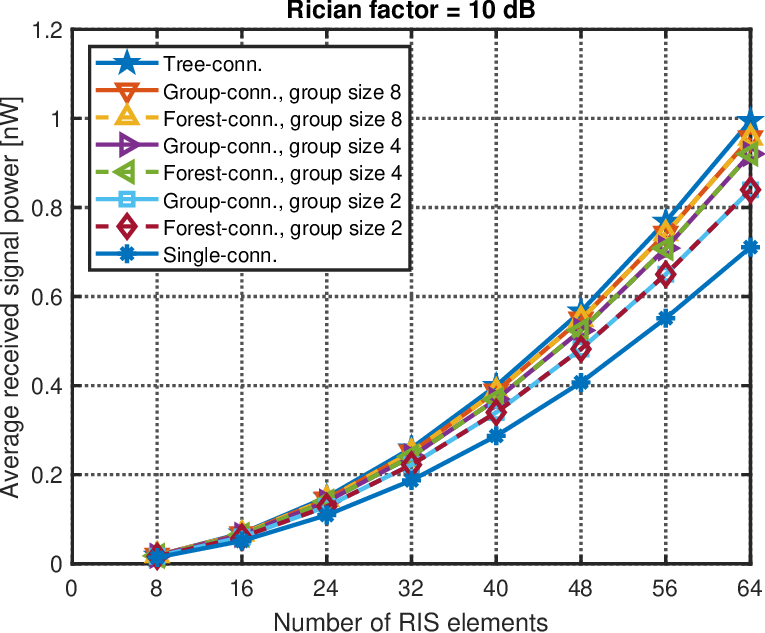}
\caption{Received signal power in MISO systems aided by tree-, group-, forest-,
and single-connected RIS, with $M=2$.}
\label{fig:perf-miso-forest-M2}
\end{figure*}

\subsection{Tree-Connected RIS-Aided MISO Systems}

We first evaluate the performance of the tree-connected RIS to verify
that it is MISO optimal. Specifically, the tree-connected RIS is 
optimized through Alg.~\ref{alg:B-design} and compared with the
performance upper bound \eqref{eq:PR-UB} achieved by the fully-connected
RIS. We recall that in the fully-connected RIS, all the RIS ports are
connected with each other, so that $\mathbf{B}$ is an arbitrary symmetric
matrix and $\boldsymbol{\Theta}$ satisfies that
\begin{equation}
\boldsymbol{\Theta}^{H}\boldsymbol{\Theta}=\boldsymbol{\mathrm{I}},\boldsymbol{\Theta}=\boldsymbol{\Theta}^{T},
\end{equation}
 as shown in \cite{she20}. In Fig.~\ref{fig:perf-miso-tree}, we
provide the received signal power achieved in MISO systems aided by
fully- and tree-connected RIS. We can make the following observations.
\emph{First}, as expected, the tree-connected RIS can always achieve
the performance upper bound. However, it has much lower circuit complexity
than the fully-connected RIS, which will be quantitatively shown in
Section \ref{subsec:complexity}. \emph{Second},  higher received
signal power can be obtained by increasing the number of transmit
antennas $M$ for both fully- and tree-connected RIS. \emph{Last},
Rician fading channels with a lower Rician factor offer richer scattering,
allowing to reach a slightly higher performance.

\begin{figure*}[t]
\centering{}
\includegraphics[width=0.39\textwidth]{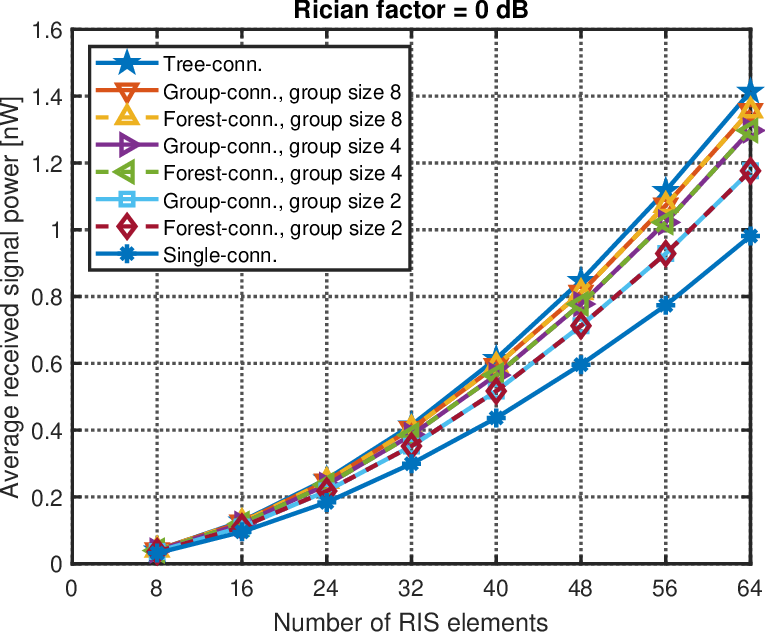}
\includegraphics[width=0.39\textwidth]{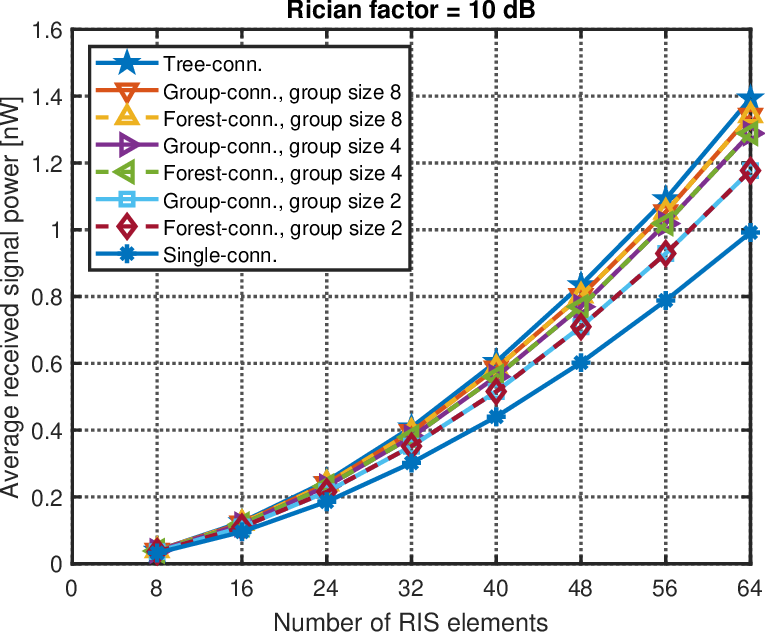}
\caption{Received signal power in MISO systems aided by tree-, group-, forest-,
and single-connected RIS, with $M=8$.}
\label{fig:perf-miso-forest-M8}
\end{figure*}

\subsection{Forest-Connected RIS-Aided MISO Systems}

We next evaluate the performance of the forest-connected RIS. Specifically,
the forest-connected RIS is optimized through Alg.~\ref{alg:B-design-forest}
and it is compared with tree-, group-, and single-connected RIS.
For the group-connected RIS, we recall that  the $N$ elements are
divided into $G$ groups and all the RIS ports within the same group
are connected with each other, so that $\mathbf{B}$ is a symmetric block diagonal matrix and $\boldsymbol{\Theta}$ satisfies
\begin{equation}
\boldsymbol{\Theta}=\mathrm{diag}\left(\boldsymbol{\Theta}_{1},\boldsymbol{\Theta}_{2},\ldots,\boldsymbol{\Theta}_{G}\right),
\end{equation}
\begin{equation}
\boldsymbol{\Theta}_{g}=\boldsymbol{\Theta}_{g}^{T},\:\boldsymbol{\Theta}_{g}^{H}\boldsymbol{\Theta}_{g}=\boldsymbol{\mathrm{I}},\:\forall g,
\end{equation}
as shown in \cite{she20}. In Figs.~\ref{fig:perf-miso-forest-M2}
and \ref{fig:perf-miso-forest-M8}, we provide the received signal
power in MISO systems aided by tree-, group-, forest-, and single-connected
RIS, with $M=2$ and $M=8$, respectively. Group-connected and single-connected
RIS are optimized as proposed in \cite[Alg. 2]{ner22}. We can make
the following observations. 

\emph{First}, the forest-connected RIS can always achieve the same
received signal power as the group-connected RIS with the same group
size. However, it has a much lower circuit complexity than the group-connected
RIS with the same group size, as quantitatively shown in
Section \ref{subsec:complexity}.

\emph{Second}, the forest-connected RIS achieves a higher received
signal power than the single-connected RIS. For example, forest-connected
RIS with group size $8$ brings an improvement in the received signal
power of $44.6\%$, when $M=2$, $N=64$, and $K=0$ dB. On the other
hand, it achieves a lower received signal power than the tree-connected
RIS due to the simplified circuit complexity. Therefore, it is shown
that the forest-connected RIS achieves a good performance-complexity
trade-off between the single- and tree-connected RIS.

\emph{Third}, the received signal power increases with the group size
in the forest-connected RIS, which is because increasing the group
size can provide more flexibility to the BD-RIS. 

\textit{Fourth}, higher received signal power can be obtained by increasing
the number of transmit antennas $M$.

\emph{Last}, BD-RIS is particularly beneficial over single-connected
RIS in the presence of fading channels with lower Rician factors,
in agreement with \cite{she20}.

In Fig.~\ref{fig:rate}, we also provide the achievable rate in MISO systems aided by tree-, group-, forest-, and single-connected RIS, with $M=2$, $K=0$~dB, and \gls{awgn} power $\sigma^2=-80$~dBm.
With $N=64$, the tree-connected RIS improves the achievable rate by $10\%$ over the single-connected RIS.
In addition, the performance of a 64-element single-connected RIS can be obtained by a 52-element tree-connected RIS, which is beneficial for reducing by $19\%$ the area of RIS.

\subsection{Circuit Topology Complexity}

\label{subsec:complexity}

We finally evaluate the circuit topology complexity of the proposed
tree- and forest-connected RIS. As analyzed in Section III, the circuit
topology complexity in terms of the number of the tunable admittance
components of the tree- and forest-connected RIS is $2N-1$ and $N(2-1/N_{G})$,
respectively. For comparison, we also consider the circuit topology
complexity of fully-, group-, and single-connected RIS, given by
$N(N+1)/2$, $N(N_{G}+1)/2$, and $N$, respectively \cite{she20}.
In Fig.~\ref{fig:complexity}, we provide the number of tunable admittance
components in the fully-, group, tree-, forest-, and single-connected
RIS. We can make the following observations.

\textit{First}, compared with the fully-connected RIS, the tree-connected
RIS has much lower circuit topology complexity. For example, the number
of tunable admittance components is decreased by $16.4$ times when
$N=64$ in the tree-connected RIS compared to the fully-connected
RIS, but they achieve the same performance.

\textit{Second}, compared with the group-connected RIS with same group
size, the forest-connected RIS has much lower circuit topology complexity.
For example, the number of tunable admittance components is reduced
by $2.4$ times when $N=64$ in the forest-connected RIS compared to the group-connected RIS with group
size $8$, but they achieve the same performance.

\textit{Third}, increasing the group size in the forest-connected
RIS will increase the number of tunable admittance components while
also enhancing the performance. 

\textit{Fourth}, compared with the single-connected RIS, the tree-
and forest-connected RIS introduce appropriate complexity but achieves
an improvement in the received power.

To conclude, we demonstrate that the benefit of the tree-connected (resp.
forest-connected) RIS over the fully-connected (resp. group-connected)
RIS lies in their highly simplified circuit complexity while maintaining
optimal performance. Therefore, the proposed tree- and forest-connected
architectures significantly improve the performance-complexity trade-off
over existing BD-RIS architectures. 

\begin{figure}[t]
\centering{}
\includegraphics[width=0.39\textwidth]{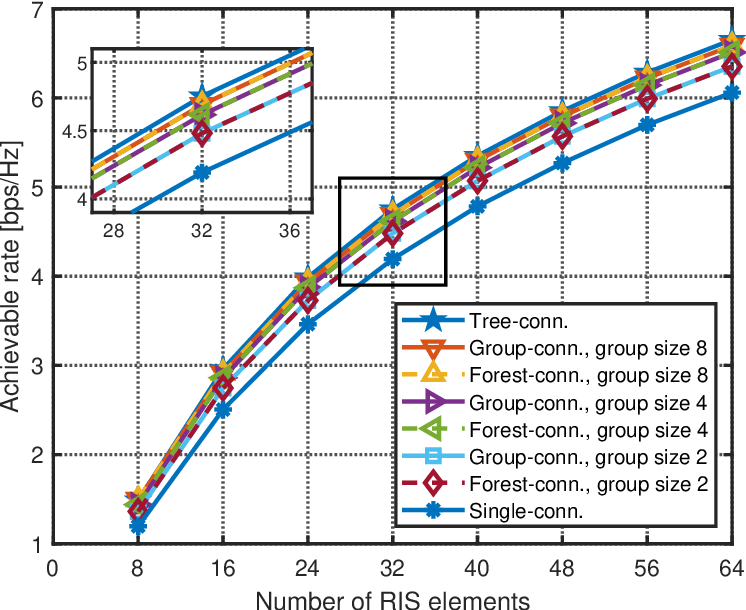}
\caption{Achievable rate in MISO systems aided by tree-, group-, forest-,
and single-connected RIS, with $M=2$ and $K=0$ dB.}
\label{fig:rate} 
\end{figure}

\begin{figure}[t]
\centering{}
\includegraphics[width=0.39\textwidth]{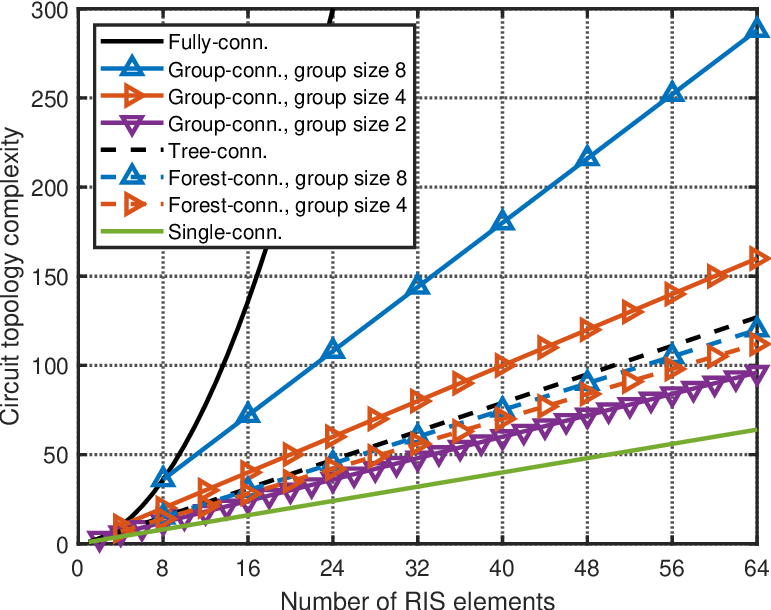}
\caption{Circuit topology complexity, i.e., the number of tunable admittance
components, of fully-, group-, tree-, forest-, and single-connected
RIS.}
\label{fig:complexity} 
\end{figure}

\section{Conclusion}
\label{sec:conclusion}

We propose novel modeling, architecture design, and optimization for
BD-RIS by utilizing graph theory. In particular, we model BD-RIS architectures
as graphs, capturing the presence of interconnections between the
RIS elements. Through this modeling, we prove that a BD-RIS achieves
the performance upper bound in MISO systems if and only if its associated
graph is connected. This remarkable result allows us to characterize
the least complex BD-RIS architectures able to achieve the performance
upper bound in MISO systems, denoted as tree-connected RIS. We also
propose forest-connected RIS to bridge between the single-connected
and the tree-connected architectures. To optimize these novel BD-RIS architectures,
we derive a closed-form global optimal solution for tree-connected
RIS, and an iterative algorithm for forest-connected RIS. Numerical
results confirm that tree-connected (resp. forest-connected) RIS
achieve the same performance as fully-connected (resp. group-connected)
RIS, with a significantly reduced circuit complexity by up to 16.4 times.
We leave the optimization and performance evaluation of the proposed tree- and forest-connected RISs in other scenarios such as multi-user systems as the object of
future research.

The proposed graph theoretical modeling of BD-RIS is expected to
promote significant advancements in the growth of BD-RIS.
This modeling can be used to explore the vast design space of possible architectures
and evaluate them in terms of achievable performance and circuit complexity.
This will enable the systematic development of new BD-RIS architectures
with a more favorable balance between performance and complexity, and the implementation of BD-RIS prototypes.

\section*{Appendix}

\subsection{Proof of Lemma~\ref{lem:connected}}

We prove the necessary condition by showing that if $\mathcal{G}$
is not a connected graph, the corresponding BD-RIS is not MISO optimal.
If $\mathcal{G}$ is disconnected, it has $C\geq2$ connected components,
where the $c$th component includes $N^{(c)}$ ports, for $c=1,\ldots,C$,
with $\sum_{c=1}^{C}N^{(c)}=N$. With no loss of generality, we assume
that each component includes adjacent ports. Thus, the admittance
matrix $\mathbf{Y}$ given by \eqref{eq:Yij} is block diagonal, with
the $c$th block having dimensions $N^{(c)}\times N^{(c)}$, for $c=1,\ldots,C$.
As a consequence of \eqref{eq:T(Y)}, also the scattering matrix $\boldsymbol{\Theta}$
is block diagonal, and writes as $\boldsymbol{\Theta}=\mathrm{diag}\left(\boldsymbol{\Theta}_{1},\ldots,\boldsymbol{\Theta}_{C}\right)$,
where $\boldsymbol{\Theta}_{c}\in\mathbb{C}^{N^{(c)}\times N^{(c)}}$,
for $c=1,\ldots,C$.  Thus, the received signal power \eqref{eq:received power}
in the case of a disconnected graph $\mathcal{G}$ is 
\begin{equation}
P_{R}^{\textrm{\ensuremath{\mathrm{DisC}}}}=P_{T}\left\vert \sum_{c=1}^{C}\mathbf{h}_{RI,c}\boldsymbol{\Theta}_{c}\mathbf{H}_{IT,c}\mathbf{w}\right\vert ^{2},
\end{equation}
where $\mathbf{h}_{RI,c}\in\mathbb{C}^{1\times N^{(c)}}$ and $\mathbf{H}_{IT,c}\in\mathbb{C}^{N^{(c)}\times M}$
contain the $N^{(c)}$ elements of $\mathbf{h}_{RI}$ and rows of
$\mathbf{H}_{IT}$ corresponding to the $N^{(c)}$ RIS elements grouped
into the $c$th component, respectively. The received signal power
$P_{R}^{\textrm{\ensuremath{\mathrm{DisC}}}}$ is upper bounded by
\begin{equation}
\bar{P}_{R}^{\textrm{DisC}}=P_{T}\left(\sum_{c=1}^{C}\left\Vert \mathbf{h}_{RI,c}\right\Vert \left\Vert \mathbf{H}_{IT,c}\mathbf{w}\right\Vert \right)^{2},
\end{equation}
following the Cauchy-Schwarz inequality and that $\boldsymbol{\Theta}_{c}^{H}\boldsymbol{\Theta}_{c}=\boldsymbol{\mathrm{I}}$,
for $c=1,\ldots,C$. Furthermore, it holds that 
\begin{equation}
\bar{P}_{R}^{\textrm{DisC}}\overset{(a)}{\leq}P_{T}\left\Vert \mathbf{h}_{RI}\right\Vert ^{2}\left\Vert \mathbf{H}_{IT}\mathbf{w}\right\Vert ^{2}\overset{(b)}{\leq}\bar{P}_{R},
\end{equation}
where $(a)$ follows by the Cauchy-Schwarz inequality and $(b)$ follows
by the definition of the spectral norm. Note that $(b)$ is an equality
if and only if $\mathbf{w}$ is the dominant right singular vector
of $\mathbf{H}_{IT}$. However,  $(a)$ is in general a strict inequality
if $\mathbf{w}$ is the dominant right singular vector of $\mathbf{H}_{IT}$,
which thus proves the necessary condition of the lemma. 

On the other hand, we prove the sufficient condition by showing that
if $\mathcal{G}$ is a connected graph, the corresponding BD-RIS is
MISO optimal. This is straightforward to prove because 1) for a connected
graph, we can always remove some edges to make it a tree, i.e. setting
the corresponding off-diagonal entries of $\mathbf{B}$ zero, 2) there
exist one and only one solution for any RIS whose associated graph
is a tree to achieve the performance upper bound (i.e., to be MISO
optimal) as shown in Section IV. A.

\subsection{Proof of Proposition~\ref{pro:N-1}}

Applying Lemma~\ref{lem:connected}, we need to prove that a connected
graph with $N$ vertices has at least $N-1$ edges. This is achieved
by induction. The base case is easily verified: a connected graph
with a single vertex has at least zero edges. As the induction step,
we consider a connected graph with $N$ vertices. From this graph,
we remove edges (at least one) until we obtain a disconnected graph
with two connected components. Assuming the two components have $K$
and $N-K$ vertices, they have at least $K-1$ and $N-K-1$ edges
by the induction hypothesis, respectively. Since we removed at least
one edge to disconnect the graph, it had originally at least $(K-1)+(N-K-1)+1=N-1$
edges.

\subsection{Proof of Proposition~\ref{pro:rank1}}

This proof is conducted by induction. As the base case, we consider
the only tree-connected RIS that can be realized with $N=2$ ports.
This RIS includes two tunable admittance components connecting the
two ports to ground and a further tunable admittance connecting the
two ports to each other.

Based on the susceptance matrix $\mathbf{B}\in\mathbb{R}^{2\times2}$,
the left-hand side of the system \eqref{eq:system3} is built by setting
$\mathbf{x}=[[\mathbf{B}]_{1,1},[\mathbf{B}]_{2,2},[\mathbf{B}]_{1,2}]^{T}$.
Accordingly, $\mathbf{A}\in\mathbb{R}^{4\times3}$ is given by \eqref{eq:A}
with $\mathbf{A}_{1}\in\mathbb{C}^{2\times2}$ as in \eqref{eq:A1}
and $\mathbf{A}_{2}\in\mathbb{C}^{2\times1}$ given by $\mathbf{A}_{2}=[[\boldsymbol{\alpha}]_{2},[\boldsymbol{\alpha}]_{1}]^{T}$.
Thus, it is easy to recognize that $\mathbf{A}$ has in general full
column rank, i.e., $r\left(\mathbf{A}\right)=3$. The proposition
is hence verified for the case $N=2$.

As the induction step, we prove that if the proposition is valid for
RISs with $N-1$ ports, it also holds for RISs with $N$ ports. We
consider a tree-connected RIS with $N-1$ elements, whose coefficient
matrix is given by 
\begin{equation}
\mathbf{A}^{(N-1)}=\left[\begin{array}{cc}
\Re\left\{ \mathbf{A}_{1}^{(N-1)}\right\}  & \Re\left\{ \mathbf{A}_{2}^{(N-1)}\right\} \\
\hdashline\Im\left\{ \mathbf{A}_{1}^{(N-1)}\right\}  & \Im\left\{ \mathbf{A}_{2}^{(N-1)}\right\} 
\end{array}\right].\label{eq:A(N-1)}
\end{equation}
To this $(N-1)$-port tree-connected RIS, we connect an additional
port creating an $N$-port tree-connected RIS. With no loss of generality,
we assume that the additional port is connected to the $(N-1)$th
port through a tunable admittance. In this way, the resulting $N$-port
BD-RIS has the coefficient matrix 
\begin{equation}
\mathbf{A}^{(N)}=\left[\begin{array}{cc}
\Re\left\{ \mathbf{A}_{1}^{(N)}\right\}  & \Re\left\{ \mathbf{A}_{2}^{(N)}\right\} \\
\hdashline\Im\left\{ \mathbf{A}_{1}^{(N)}\right\}  & \Im\left\{ \mathbf{A}_{2}^{(N)}\right\} 
\end{array}\right],\label{eq:A(N)}
\end{equation}
where $\mathbf{A}_{1}^{(N)}\in\mathbb{C}^{N\times N}$ and $\mathbf{A}_{2}^{(N)}\in\mathbb{C}^{N\times(N-1)}$
are given by 
\begin{equation}
\mathbf{A}_{1}^{(N)}=\left[\begin{array}{cc}
\mathbf{A}_{1}^{(N-1)} & \mathbf{0}_{(N-1)\times 1}\\
\hdashline
\mathbf{0}_{1\times (N-1)} & \left[\boldsymbol{\alpha}\right]_{N}
\end{array}\right],\label{eq:A1(N)}
\end{equation}
\begin{equation}
\mathbf{A}_{2}^{(N)}=\left[\begin{array}{cc}
\mathbf{A}_{2}^{(N-1)} & \begin{matrix}\mathbf{0}_{(N-2)\times 1}\\
\left[\boldsymbol{\alpha}\right]_{N}
\end{matrix}\\
\hdashline
\mathbf{0}_{1\times (N-2)} & \left[\boldsymbol{\alpha}\right]_{N-1}
\end{array}\right],\label{eq:A2(N)}
\end{equation}
with $\mathbf{0}_{R\times C}$ denoting an $R\times C$ all-zero matrix.
To prove the induction step, we need to show that $r(\mathbf{A}^{(N-1)})=2(N-1)-1$
implies $r(\mathbf{A}^{(N)})=2N-1$.
To this end, $\mathbf{A}^{(N)}$ is rewritten as 
\begin{equation}
\mathbf{A}^{(N)}\sim\left[\begin{array}{ccc}
\mathbf{A}^{(N-1)} & \mathbf{0}_{(2N-2)\times 1} & \begin{matrix}\mathbf{0}_{(N-2)\times 1}\\
\Re\left\{ \left[\boldsymbol{\alpha}\right]_{N}\right\}\\
\mathbf{0}_{(N-2)\times 1}\\
\Im\left\{ \left[\boldsymbol{\alpha}\right]_{N}\right\} 
\end{matrix}\\
\hdashline
\mathbf{0}_{2\times(2N-3)} & \begin{matrix}\Re\left\{ \left[\boldsymbol{\alpha}\right]_{N}\right\} \\
\Im\left\{ \left[\boldsymbol{\alpha}\right]_{N}\right\} 
\end{matrix} & \begin{matrix}\Re\left\{ \left[\boldsymbol{\alpha}\right]_{N-1}\right\} \\
\Im\left\{ \left[\boldsymbol{\alpha}\right]_{N-1}\right\} 
\end{matrix}
\end{array}\right],\label{eq:A(N)2}
\end{equation}
by applying appropriate row and column swapping operations. From \eqref{eq:A(N)2},
we notice that if $\mathbf{A}^{(N-1)}$ has full column rank and the
vectors $\left[\Re\left\{ \left[\boldsymbol{\alpha}\right]_{N}\right\} ,\Im\left\{ \left[\boldsymbol{\alpha}\right]_{N}\right\} \right]$
and $\left[\Re\left\{ \left[\boldsymbol{\alpha}\right]_{N-1}\right\} ,\Im\left\{ \left[\boldsymbol{\alpha}\right]_{N-1}\right\} \right]$
are linearly independent, then $\mathbf{A}^{(N)}$ has full column
rank. Since $\left[\Re\left\{ \left[\boldsymbol{\alpha}\right]_{N}\right\} ,\Im\left\{ \left[\boldsymbol{\alpha}\right]_{N}\right\} \right]$
and $\left[\Re\left\{ \left[\boldsymbol{\alpha}\right]_{N-1}\right\} ,\Im\left\{ \left[\boldsymbol{\alpha}\right]_{N-1}\right\} \right]$
are in practice linearly independent with probability 1, the induction
step is proven.

\subsection{Proof of Proposition~\ref{pro:rank2}}

To prove that $r\left(\left[\mathbf{A}|\mathbf{b}\right]\right)=2N-1$,
we show that it is always possible to obtain a zero row in $\left[\mathbf{A}|\mathbf{b}\right]$
by applying appropriate row operations.
Firstly, we execute on $\left[\mathbf{A}|\mathbf{b}\right]$ the $N$
row operations 
\begin{equation}
\mathbf{r}_{N+n}=\mathbf{r}_{N+n}-\frac{\Im\left\{ \left[\boldsymbol{\alpha}\right]_{n}\right\} }{\Re\left\{ \left[\boldsymbol{\alpha}\right]_{n}\right\} }\mathbf{r}_{n},
\end{equation}
for $n=1,\ldots,N$, where $\mathbf{r}_{n}$ denotes the $n$th row
of $\left[\mathbf{A}|\mathbf{b}\right]$. The resulting matrix is
given by 
\begin{equation}
\left[\mathbf{A}|\mathbf{b}\right]\sim\left[\begin{array}{ccc}
\Re\left\{ \mathbf{A}_{1}\right\}  & \Re\left\{ \mathbf{A}_{2}\right\}  & \Re\left\{ \boldsymbol{\beta}\right\} \\
\hdashline\mathbf{0}_{N\times N} & \mathbf{A}' & \mathbf{b}'
\end{array}\right],\label{eq:Gaussian1}
\end{equation}
where we introduced $\mathbf{A}'\in\mathbb{R}^{N\times (N-1)}$ and
$\mathbf{b}'\in\mathbb{R}^{N\times1}$. The matrix $\mathbf{A}'$
has exactly two non-zero elements in each column, in the same positions
as the non-zero elements in $\mathbf{A}_{2}$. Specifically,
\begin{equation}
\begin{cases}
\left[\mathbf{A}'\right]_{n_{\ell},\ell} & =\Im\left\{ \left[\boldsymbol{\alpha}\right]_{m_{\ell}}\right\} -\frac{\Im\left\{ \left[\boldsymbol{\alpha}\right]_{n_{\ell}}\right\} }{\Re\left\{ \left[\boldsymbol{\alpha}\right]_{n_{\ell}}\right\} }\Re\left\{ \left[\boldsymbol{\alpha}\right]_{m_{\ell}}\right\} ,\\
\left[\mathbf{A}'\right]_{m_{\ell},\ell} & =\Im\left\{ \left[\boldsymbol{\alpha}\right]_{n_{\ell}}\right\} -\frac{\Im\left\{ \left[\boldsymbol{\alpha}\right]_{m_{\ell}}\right\} }{\Re\left\{ \left[\boldsymbol{\alpha}\right]_{m_{\ell}}\right\} }\Re\left\{ \left[\boldsymbol{\alpha}\right]_{n_{\ell}}\right\} ,\\
\left[\mathbf{A}'\right]_{p,\ell} & =0,\forall p\neq m_{\ell},n_{\ell},
\end{cases}
\end{equation}
for $\ell=1,\ldots,N-1$, with $n_{\ell}$ and $m_{\ell}$ being the
row indexes of the two non-zero elements in the $\ell$th column of
$\mathbf{A}_{2}$. 
The vector $\mathbf{b}'$ writes as 
\begin{equation}
\left[\mathbf{b}'\right]_{n}=\Im\left\{ \left[\boldsymbol{\beta}\right]_{n}\right\} -\frac{\Im\left\{ \left[\boldsymbol{\alpha}\right]_{n}\right\} }{\Re\left\{ \left[\boldsymbol{\alpha}\right]_{n}\right\} }\Re\left\{ \left[\boldsymbol{\beta}\right]_{n}\right\} ,
\end{equation}
for $n=1,\ldots,N$. Secondly, we execute on the resulting $\left[\mathbf{A}|\mathbf{b}\right]$
the $N$ row operations 
\begin{equation}
\mathbf{r}_{N+n}=\Re\left\{ \left[\boldsymbol{\alpha}\right]_{n}\right\} \mathbf{r}_{N+n},
\end{equation}
for $n=1,\ldots,N$, followed by the $N-1$ row operations 
\begin{equation}
\mathbf{r}_{2N}=\mathbf{r}_{2N}+\mathbf{r}_{N+n},
\end{equation}
for $n=1,\ldots,N-1$. The resulting matrix is given by 
\begin{equation}
\left[\mathbf{A}|\mathbf{b}\right]\sim\left[\begin{array}{ccc}
\Re\left\{ \mathbf{A}_{1}\right\}  & \Re\left\{ \mathbf{A}_{2}\right\}  & \Re\left\{ \boldsymbol{\beta}\right\} \\
\hdashline\mathbf{0}_{N\times N} & \mathbf{A}'' & \mathbf{b}''
\end{array}\right],\label{eq:Gaussian2}
\end{equation}
where we introduced $\mathbf{A}''\in\mathbb{R}^{N\times N-1}$ and
$\mathbf{b}''\in\mathbb{R}^{N\times1}$.

We now show that \eqref{eq:Gaussian2}
has rank $2N-1$ since its $2N$th row is all-zero.
To this end, we need to prove that $\left[\mathbf{A}''\right]_{N,\ell}=0$, for $\ell=1,\ldots,N-1$, and that $\left[\mathbf{b}''\right]_{N}=0$.
Indeed, we have
\begin{equation}
\left[\mathbf{A}''\right]_{N,\ell}=\Re\left\{ \left[\boldsymbol{\alpha}\right]_{n_{\ell}}\right\} \left[\mathbf{A}'\right]_{n_{\ell},\ell}+\Re\left\{ \left[\boldsymbol{\alpha}\right]_{m_{\ell}}\right\} \left[\mathbf{A}'\right]_{m_{\ell},\ell}=0
\end{equation}
for $\ell=1,\ldots,N-1$, with $n_{\ell}$ and $m_{\ell}$ being the
row indexes of the two non-zero elements in the $\ell$th column of
$\mathbf{A}_{2}$. Furthermore, the $N$th entry of vector $\mathbf{b}''$
is given by
\begin{align}
\left[\mathbf{b}''\right]_{N} & =\sum_{n=1}^{N}\Re\left\{ \left[\boldsymbol{\alpha}\right]_{n}\right\} \left[\mathbf{b}'\right]_{n}\\
& =\sum_{n=1}^{N}\Re\left\{ \left[\boldsymbol{\alpha}\right]_{n}\right\} \Im\left\{ \left[\boldsymbol{\beta}\right]_{n}\right\} -\Im\left\{ \left[\boldsymbol{\alpha}\right]_{n}\right\} \Re\left\{ \left[\boldsymbol{\beta}\right]_{n}\right\} ,
\end{align}
where
\begin{align}
& \Re\left\{ \left[\boldsymbol{\alpha}\right]_{n}\right\} \Im\left\{ \left[\boldsymbol{\beta}\right]_{n}\right\} -\Im\left\{ \left[\boldsymbol{\alpha}\right]_{n}\right\} \Re\left\{ \left[\boldsymbol{\beta}\right]_{n}\right\} \\
= & jZ_{0}\left(\Im\left\{ \left[\hat{\mathbf{h}}_{RI}^{H}\right]_{n}\right\} ^{2}-\Im\left\{ \left[\mathbf{u}_{IT}\right]_{n}\right\} ^{2}\right.\\
& \left.+\Re\left\{ \left[\hat{\mathbf{h}}_{RI}^{H}\right]_{n}\right\} ^{2}-\Re\left\{ \left[\mathbf{u}_{IT}\right]_{n}\right\} ^{2}\right)\\
= & jZ_{0}\left(\left\vert \left[\hat{\mathbf{h}}_{RI}^{H}\right]_{n}\right\vert ^{2}-\left\vert \left[\mathbf{u}_{IT}\right]_{n}\right\vert ^{2}\right).
\end{align}
Thus, recalling that $\Vert\hat{\mathbf{h}}_{RI}^{H}\Vert^{2}=1$ and $\Vert\mathbf{u}_{IT}\Vert^{2}=1$, we have
\begin{equation}
\left[\mathbf{b}''\right]_{N}=jZ_{0}\left(\left\Vert \hat{\mathbf{h}}_{RI}^{H}\right\Vert ^{2}-\left\Vert \mathbf{u}_{IT}\right\Vert ^{2}\right)=0,
\end{equation}
proving that the $2N$th row of \eqref{eq:Gaussian2} is all-zero.

\bibliographystyle{IEEEtran}
\bibliography{IEEEabrv,main}

\begin{thebibliography}{10}
\providecommand{\url}[1]{#1}
\csname url@samestyle\endcsname
\providecommand{\newblock}{\relax}
\providecommand{\bibinfo}[2]{#2}
\providecommand{\BIBentrySTDinterwordspacing}{\spaceskip=0pt\relax}
\providecommand{\BIBentryALTinterwordstretchfactor}{4}
\providecommand{\BIBentryALTinterwordspacing}{\spaceskip=\fontdimen2\font plus
\BIBentryALTinterwordstretchfactor\fontdimen3\font minus
  \fontdimen4\font\relax}
\providecommand{\BIBforeignlanguage}[2]{{%
\expandafter\ifx\csname l@#1\endcsname\relax
\typeout{** WARNING: IEEEtran.bst: No hyphenation pattern has been}%
\typeout{** loaded for the language `#1'. Using the pattern for}%
\typeout{** the default language instead.}%
\else
\language=\csname l@#1\endcsname
\fi
#2}}
\providecommand{\BIBdecl}{\relax}
\BIBdecl

\bibitem{bas19}
E.~Basar, M.~Di~Renzo, J.~De~Rosny, M.~Debbah, M.-S. Alouini, and R.~Zhang,
  ``{Wireless Communications Through Reconfigurable Intelligent Surfaces},''
  \emph{IEEE Access}, vol.~7, pp. 116\,753--116\,773, 2019.

\bibitem{wu19}
Q.~Wu and R.~Zhang, ``{Towards Smart and Reconfigurable Environment:
  Intelligent Reflecting Surface Aided Wireless Network},'' \emph{IEEE Commun.
  Mag.}, vol.~58, no.~1, pp. 106--112, 2020.

\bibitem{wu21}
Q.~Wu, S.~Zhang, B.~Zheng, C.~You, and R.~Zhang, ``{Intelligent Reflecting
  Surface-Aided Wireless Communications: A Tutorial},'' \emph{IEEE Trans.
  Commun.}, vol.~69, no.~5, pp. 3313--3351, 2021.

\bibitem{she20}
S.~Shen, B.~Clerckx, and R.~Murch, ``{Modeling and Architecture Design of
  Reconfigurable Intelligent Surfaces Using Scattering Parameter Network
  Analysis},'' \emph{IEEE Trans. Wireless Commun.}, pp. 1--1, 2021.

\bibitem{wu19b}
Q.~Wu and R.~Zhang, ``Intelligent reflecting surface enhanced wireless network
  via joint active and passive beamforming,'' \emph{IEEE Trans. Wireless
  Commun.}, vol.~18, no.~11, pp. 5394--5409, 2019.

\bibitem{guo20}
H.~Guo, Y.-C. Liang, J.~Chen, and E.~G. Larsson, ``Weighted sum-rate
  maximization for reconfigurable intelligent surface aided wireless
  networks,'' \emph{IEEE Trans. Wireless Commun.}, vol.~19, no.~5, pp.
  3064--3076, 2020.

\bibitem{liu21}
Y.~Liu, J.~Zhao, M.~Li, and Q.~Wu, ``Intelligent reflecting surface aided
  {MISO} uplink communication network: Feasibility and power minimization for
  perfect and imperfect {CSI},'' \emph{IEEE Trans. Commun.}, vol.~69, no.~3,
  pp. 1975--1989, 2021.

\bibitem{pan20}
C.~Pan, H.~Ren, K.~Wang, W.~Xu, M.~Elkashlan, A.~Nallanathan, and L.~Hanzo,
  ``Multicell {MIMO} communications relying on intelligent reflecting
  surfaces,'' \emph{IEEE Trans. Wireless Commun.}, vol.~19, no.~8, pp.
  5218--5233, 2020.

\bibitem{li21}
H.~Li, W.~Cai, Y.~Liu, M.~Li, Q.~Liu, and Q.~Wu, ``Intelligent reflecting
  surface enhanced wideband {MIMO-OFDM} communications: From practical model to
  reflection optimization,'' \emph{IEEE Trans. Commun.}, vol.~69, no.~7, pp.
  4807--4820, 2021.

\bibitem{xiu21}
Y.~Xiu, J.~Zhao, W.~Sun, M.~D. Renzo, G.~Gui, Z.~Zhang, and N.~Wei,
  ``Reconfigurable intelligent surfaces aided mmwave {NOMA}: Joint power
  allocation, phase shifts, and hybrid beamforming optimization,'' \emph{IEEE
  Trans. Wireless Commun.}, vol.~20, no.~12, pp. 8393--8409, 2021.

\bibitem{ban21}
A.~Bansal, K.~Singh, B.~Clerckx, C.-P. Li, and M.-S. Alouini, ``Rate-splitting
  multiple access for intelligent reflecting surface aided multi-user
  communications,'' \emph{IEEE Trans. Veh. Technol.}, vol.~70, no.~9, pp.
  9217--9229, 2021.

\bibitem{hu20}
J.~Hu, H.~Zhang, B.~Di, L.~Li, K.~Bian, L.~Song, Y.~Li, Z.~Han, and H.~V. Poor,
  ``Reconfigurable intelligent surface based {RF} sensing: Design,
  optimization, and implementation,'' \emph{IEEE J. Sel. Areas Commun.},
  vol.~38, no.~11, pp. 2700--2716, 2020.

\bibitem{fen22}
Z.~Feng, B.~Clerckx, and Y.~Zhao, ``Waveform and beamforming design for
  intelligent reflecting surface aided wireless power transfer: Single-user and
  multi-user solutions,'' \emph{IEEE Trans. Wireless Commun.}, vol.~21, no.~7,
  pp. 5346--5361, 2022.

\bibitem{zha22a}
Y.~Zhao, B.~Clerckx, and Z.~Feng, ``{IRS}-aided {SWIPT}: Joint waveform, active
  and passive beamforming design under nonlinear harvester model,'' \emph{IEEE
  Trans. Commun.}, vol.~70, no.~2, pp. 1345--1359, 2022.

\bibitem{pen22}
Z.~Peng, Z.~Chen, C.~Pan, G.~Zhou, and H.~Ren, ``Robust transmission design for
  {RIS}-aided communications with both transceiver hardware impairments and
  imperfect {CSI},'' \emph{IEEE Wireless Commun. Lett.}, vol.~11, no.~3, pp.
  528--532, 2022.

\bibitem{che22}
Y.~Chen, Y.~Wang, and L.~Jiao, ``Robust transmission for reconfigurable
  intelligent surface aided millimeter wave vehicular communications with
  statistical {CSI},'' \emph{IEEE Trans. Wireless Commun.}, vol.~21, no.~2, pp.
  928--944, 2022.

\bibitem{zhe21}
B.~Zheng, C.~You, and R.~Zhang, ``Double-{IRS} assisted multi-user {MIMO}:
  Cooperative passive beamforming design,'' \emph{IEEE Trans. Wireless
  Commun.}, vol.~20, no.~7, pp. 4513--4526, 2021.

\bibitem{mei22}
W.~Mei and R.~Zhang, ``Intelligent reflecting surface for multi-path beam
  routing with active/passive beam splitting and combining,'' \emph{IEEE
  Commun. Lett.}, vol.~26, no.~5, pp. 1165--1169, 2022.

\bibitem{wu19c}
Q.~Wu and R.~Zhang, ``Beamforming optimization for intelligent reflecting
  surface with discrete phase shifts,'' in \emph{ICASSP 2019 - 2019 IEEE
  International Conference on Acoustics, Speech and Signal Processing
  (ICASSP)}, 2019, pp. 7830--7833.

\bibitem{di20}
B.~Di, H.~Zhang, L.~Song, Y.~Li, Z.~Han, and H.~V. Poor, ``Hybrid beamforming
  for reconfigurable intelligent surface based multi-user communications:
  Achievable rates with limited discrete phase shifts,'' \emph{IEEE J. Sel.
  Areas Commun.}, vol.~38, no.~8, pp. 1809--1822, 2020.

\bibitem{guo22}
H.~Guo and V.~K.~N. Lau, ``Uplink cascaded channel estimation for intelligent
  reflecting surface assisted multiuser {MISO} systems,'' \emph{IEEE Trans.
  Signal Process.}, vol.~70, pp. 3964--3977, 2022.

\bibitem{lon21}
R.~Long, Y.-C. Liang, Y.~Pei, and E.~G. Larsson, ``Active reconfigurable
  intelligent surface-aided wireless communications,'' \emph{IEEE Trans.
  Wireless Commun.}, vol.~20, no.~8, pp. 4962--4975, 2021.

\bibitem{zha23}
Z.~Zhang, L.~Dai, X.~Chen, C.~Liu, F.~Yang, R.~Schober, and H.~V. Poor,
  ``Active {RIS} vs. passive {RIS}: Which will prevail in {6G}?'' \emph{IEEE
  Trans. Commun.}, vol.~71, no.~3, pp. 1707--1725, 2023.

\bibitem{dai20}
L.~Dai, B.~Wang, M.~Wang, X.~Yang, J.~Tan, S.~Bi, S.~Xu, F.~Yang, Z.~Chen,
  M.~D. Renzo, C.-B. Chae, and L.~Hanzo, ``Reconfigurable intelligent
  surface-based wireless communications: Antenna design, prototyping, and
  experimental results,'' \emph{IEEE Access}, vol.~8, pp. 45\,913--45\,923,
  2020.

\bibitem{rao22}
J.~Rao, Y.~Zhang, S.~Tang, Z.~Li, S.~Shen, C.-Y. Chiu, and R.~Murch, ``A novel
  reconfigurable intelligent surface for wide-angle passive beamforming,''
  \emph{IEEE Trans. Microw. Theory Tech.}, vol.~70, no.~12, pp. 5427--5439,
  2022.

\bibitem{li23-1}
H.~Li, S.~Shen, M.~Nerini, and B.~Clerckx, ``Reconfigurable intelligent
  surfaces 2.0: Beyond diagonal phase shift matrices,'' \emph{IEEE Commun.
  Mag.}, pp. 1--7, 2023.

\bibitem{ner21}
M.~Nerini, S.~Shen, and B.~Clerckx, ``Discrete-value group and fully connected
  architectures for beyond diagonal reconfigurable intelligent surfaces,''
  \emph{IEEE Trans. Veh. Technol.}, pp. 1--15, 2023.

\bibitem{ner22}
------, ``Closed-form global optimization of beyond diagonal reconfigurable
  intelligent surfaces,'' \emph{IEEE Trans. Wireless Commun.}, pp. 1--1, 2023.

\bibitem{xu21}
J.~Xu, Y.~Liu, X.~Mu, and O.~A. Dobre, ``{STAR-RISs}: Simultaneous transmitting
  and reflecting reconfigurable intelligent surfaces,'' \emph{IEEE Commun.
  Lett.}, vol.~25, no.~9, pp. 3134--3138, 2021.

\bibitem{zha22b}
H.~Zhang, S.~Zeng, B.~Di, Y.~Tan, M.~Di~Renzo, M.~Debbah, Z.~Han, H.~V. Poor,
  and L.~Song, ``Intelligent omni-surfaces for full-dimensional wireless
  communications: Principles, technology, and implementation,'' \emph{IEEE
  Commun. Mag.}, vol.~60, no.~2, pp. 39--45, 2022.

\bibitem{li22-1}
H.~Li, S.~Shen, and B.~Clerckx, ``Beyond diagonal reconfigurable intelligent
  surfaces: From transmitting and reflecting modes to single-, group-, and
  fully-connected architectures,'' \emph{IEEE Trans. Wireless Commun.},
  vol.~22, no.~4, pp. 2311--2324, 2023.

\bibitem{li22-2}
------, ``Beyond diagonal reconfigurable intelligent surfaces: A multi-sector
  mode enabling highly directional full-space wireless coverage,'' \emph{IEEE
  J. Sel. Areas Commun.}, pp. 1--1, 2023.

\bibitem{li23-2}
------, ``Synergizing beyond diagonal reconfigurable intelligent surface and
  rate-splitting multiple access,'' \emph{IEEE Trans. Wireless Commun.}, pp.
  1--1, 2024.

\bibitem{li22-3}
------, ``A dynamic grouping strategy for beyond diagonal reconfigurable
  intelligent surfaces with hybrid transmitting and reflecting mode,''
  \emph{IEEE Trans. Veh. Technol.}, pp. 1--6, 2023.

\bibitem{li22}
Q.~Li, M.~El-Hajjar, I.~Hemadeh, A.~Shojaeifard, A.~A.~M. Mourad, B.~Clerckx,
  and L.~Hanzo, ``Reconfigurable intelligent surfaces relying on non-diagonal
  phase shift matrices,'' \emph{IEEE Trans. Veh. Technol.}, vol.~71, no.~6, pp.
  6367--6383, 2022.

\bibitem{poz11}
D.~M. Pozar, \emph{{Microwave engineering}}.\hskip 1em plus 0.5em minus
  0.4em\relax John wiley \& sons, 2011.

\bibitem{she16}
S.~Shen and R.~D. Murch, ``{Impedance Matching for Compact Multiple Antenna
  Systems in Random {RF} Fields},'' \emph{IEEE Trans. Antennas Propag.},
  vol.~64, no.~2, pp. 820--825, 2016.

\bibitem{bon76}
J.~A. Bondy and U.~S.~R. Murty, \emph{Graph theory with applications}.\hskip
  1em plus 0.5em minus 0.4em\relax Macmillan London, 1976, vol. 290.

\bibitem{ole90}
D.~O'Leary and G.~Stewart, ``Computing the eigenvalues and eigenvectors of
  symmetric arrowhead matrices,'' \emph{J. Comput. Phys.}, vol.~90, no.~2, pp.
  497--505, 1990.

\bibitem{sha21}
Z.~H. Shaik, E.~Bj{\"o}rnson, and E.~G. Larsson, ``{MMSE}-optimal sequential
  processing for cell-free massive {MIMO} with radio stripes,'' \emph{IEEE
  Trans. Commun.}, vol.~69, no.~11, pp. 7775--7789, 2021.

\bibitem{str93}
G.~Strang, \emph{Introduction to linear algebra}.\hskip 1em plus 0.5em minus
  0.4em\relax Wellesley-Cambridge Press Wellesley, MA, 1993, vol.~3.

\end{thebibliography}

\end{document}